\newcommand{\Expect}{\mathbb{E}}
\begin{document}
\title{%
{Adaptive Heterogeneous  Client Sampling for Federated Learning over Wireless Networks}} %

\author{Bing Luo,~\IEEEmembership{Senior Member,~IEEE,}
        Wenli Xiao,
        Shiqiang Wang,~\IEEEmembership{Senior Member,~IEEE,}\\
        Jianwei Huang,~\IEEEmembership{Fellow,~IEEE,}
        Leandros Tassiulas,~\IEEEmembership{Fellow,~IEEE}%
\IEEEcompsocitemizethanks{
\IEEEcompsocthanksitem Bing Luo is with the Data Science Research Center, Duke Kunshan University, Kunshan, China (e-mail: bing.luo@dukekunshan.edu.cn)
\IEEEcompsocthanksitem Wenli Xiao is with the  Robotics Institute, Carnegie Mellon University, USA. (e-mail: wxiao2@andrew.cmu.edu)
\IEEEcompsocthanksitem Jianwei Huang is with the School of Science and Engineering, Shenzhen Institute of Artificial Intelligence and Robotics for Society, The Chinese University of Hong Kong, Shenzhen, Shenzhen 518172, China (corresponding author, e-mail: jianweihuang@cuhk.edu.cn).
\IEEEcompsocthanksitem Shiqiang Wang  is with IBM T. J. Watson Research Center, Yorktown Heights, NY, USA. (e-mail: shiqiang.wang@ieee.org)
\IEEEcompsocthanksitem Leandros Tassiulas  is with the Department of Electrical Engineering and Institute for Network Science, Yale University, USA. (e-mail: leandros.tassiulas@yale.edu)
\IEEEcompsocthanksitem The research of Bing Luo was supported by the AIRS-Yale Joint
Postdoctoral Fellowship and the Suzhou Frontier
Science and Technology Program (SYG202310). The work of Jianwei Huang is supported by the National Natural Science Foundation of China (Project 62271434), Shenzhen Science and Technology Program (Project JCYJ20210324120011032), Guangdong Basic and Applied Basic Research Foundation (Project 2021B1515120008), Shenzhen Key Lab of Crowd Intelligence Empowered Low-Carbon Energy Network (No. ZDSYS20220606100601002), the Shenzhen Stability Science Program 2023, and the Shenzhen Institute of Artificial Intelligence and Robotics for Society.  The research of Leandros Tassiulas was supported
by the NSF-AoF: FAIN 2132573 and the ARO W911NF-23-1-0088. 
\IEEEcompsocthanksitem This paper
was presented in part at the IEEE INFOCOM 2022~\cite{luo2021tackling} 
}}%

\IEEEtitleabstractindextext{
\begin{abstract}
Federated learning (FL) algorithms usually  sample a fraction of clients  in each round (partial participation) when the number of participants is large and the server's communication bandwidth is limited. Recent works on the convergence analysis of FL have focused on unbiased client sampling, %
e.g., sampling uniformly at random, %
{which suffers from slow wall-clock time for convergence  %
due to high degrees of system heterogeneity (e.g., diverse computation and communication capacities) %
and statistical heterogeneity (e.g., unbalanced and non-i.i.d. data).} %
This paper aims to design an adaptive client sampling algorithm for FL over wireless networks that tackles both system  and statistical heterogeneity  %
to minimize the wall-clock convergence time. %
We obtain a new tractable convergence bound for FL algorithms with arbitrary client sampling probability. %
Based on the bound, %
we analytically establish the  relationship between the total learning time and sampling probability with an adaptive bandwidth allocation scheme, %
which results in  a non-convex optimization problem. %
We design an efficient algorithm %
for learning the unknown parameters in the convergence bound and develop a low-complexity algorithm to approximately solve the non-convex problem. %
Our solution reveals the impact of  system  and statistical heterogeneity parameters on the optimal client sampling design. Moreover, our solution shows that as the number of sampled clients increases, the total convergence time first decreases and then increases because %
    a larger sampling number reduces the number of rounds for convergence  but results in a longer expected  time per-round due to  limited wireless bandwidth.
 Experimental results from both hardware prototype %
 and  simulation %
 demonstrate that  
 our proposed sampling scheme
 significantly reduces the convergence time %
 compared to several baseline sampling schemes. 
 Notably, for EMNIST dataset, our scheme in hardware prototype  %
 spends $71$\% less time than the baseline uniform sampling for reaching the same target loss.
\end{abstract}
\begin{IEEEkeywords}
Federated learning,  client sampling, system heterogeneity, statistical heterogeneity, wireless networks, convergence analysis, optimization algorithm.
\end{IEEEkeywords}}

\maketitle

\IEEEdisplaynontitleabstractindextext

\IEEEpeerreviewmaketitle

\IEEEraisesectionheading{\section{Introduction}}

With the rapid advancement of 5G, Internet of Things (IoT), and social networking applications, there is an exponential increase of data generated at network edge devices, such as smartphones, IoT devices, and sensors \cite{chiang2016fog}. Although these valuable data provide beneficial information for the prediction, classification,  and other intelligent applications, %
in order to analyze and exploit the massive amount of data, standard machine learning  technologies  need to collect the training data {in a central server}.  %
However, such centralized data collection and training fashion can be quite challenging in mobile edge networks due to the limited wireless communication bandwidth  for data transmission 
and data privacy concerns \cite{mao2017survey,park2019wireless,yang2019federated}.

In tackling this challenge, 
{Federated learning (FL)} has recently emerged as an attractive distributed machine learning (DML) paradigm, which 
enables many clients\footnote{Depending on the type of clients, FL can be categorized into cross-device FL and cross-silo FL (clients are companies or organizations, etc.) \cite{kairouz2019advances}. 
We study cross-device FL and use ``device'' and ``client'' interchangeably.}
to collaboratively train a  model under the coordination of a central server while keeping the training data decentralized and private (e.g., \cite{kairouz2019advances,yang2019federated,mcmahan2017communication}). %
Similar to conventional DML systems, FL let the clients  perform most of the computation and a server iteratively aggregate their computed updates. 
Compared to traditional distributed machine learning  techniques,   
FL has two unique features (e.g., \cite{bonawitz2019towards,li2020federated,li2018federated,yu2018parallel,yu2019linear,wang2018adaptive}), as shown in Fig.~1.  {First, 
the training data are distributed 
in a {non-i.i.d.} and {unbalanced} fashion across the clients  (known as \emph{statistical heterogeneity}), which negatively  affects the convergence behavior. Second, clients are massively distributed and with diverse communication and  computation capabilities (known as \emph{system heterogeneity}),} where stragglers can slow down the physical training time. 
{The system heterogeneity is more challenging to analyze when deploying FL in wireless network, %
because the communication bandwidth in wireless environment is limited and shared by all connected devices with potential mutual interference \cite{chen2020convergence,shi2020device,nishio2019client,yang2019scheduling,luo2020cost,luocostJSAC,yang2020energy}.}   

Considering limited communication bandwidth and large communication overhead, FL algorithms (e.g., the de facto FedAvg algorithm in \cite{mcmahan2017communication}) usually %
perform {multiple local iterations} on \emph{a fraction of randomly sampled clients (known as partial participation)}  and then aggregates their resulting local model updates via the central server periodically \cite{mcmahan2017communication, bonawitz2019towards,li2018federated,li2020federated}. 
Recent works have provided theoretical convergence analysis that demonstrates the effectiveness of FL with partial participation    %
in various non-i.i.d. settings %
\cite{%
haddadpour2019convergence,karimireddy2019scaffold,li2019convergence,yang2021achieving, qu2020federated}. %

However, %
these prior works \cite{%
haddadpour2019convergence,karimireddy2019scaffold,li2019convergence,yang2021achieving, qu2020federated} have focused on sampling schemes that are  
uniformly at random or proportional to the clients' data sizes, %
{which often suffer from slow error convergence with respect to wall-clock (physical) time\footnote{We use wall-clock time to distinguish from the number of training rounds. %
} due to high degrees  of the system  and  statistical  heterogeneity}. %
This is because the total FL time {depends on \emph{both the number of training rounds} for reaching the target training  precision and %
\emph{the physical running time in each round}} \cite{stragglers}. 
Although uniform sampling  guarantees that the aggregated model update in each round is unbiased towards that with full client participation, the aggregated model may have a high variance %
due to data heterogeneity, thus,  \emph{requiring more training rounds to converge to a target loss or accuracy precision}. %
Moreover, considering clients'  heterogeneous  computation and communication capacities, uniform sampling also suffers from the \emph{staggering effect}, as the probability of sampling a straggler (slow computation or communication time) within the sampled subset in each round can be relatively  high, %
thus \emph{yielding a long per-round time}.

One effective way of speeding up the error convergence with respect to the number of training rounds is  %
to choose clients according to some sampling distribution where ``important" clients have  high probabilities \cite{zhao2015stochastic,needell2014stochastic,alain2015variance,stich2017safe,gopal2016adaptive}. For example, recent works in FL community have adopted  {importance sampling} approaches  via exploring 
clients' statistical property %
\cite{chen2020optimal,rizk2020federated,nguyen2020fast,cho2020client,pmlr-v139-fraboni21a,fraboni2021impact}.
However, %
a limitation of these works is that they
did not account for the heterogeneous physical time  in each round,  especially under  straggling circumstances. %
Another line of works aims to %
minimize the learning time via optimizing client selection and scheduling based on their heterogeneous system resources when deploying FL in wireless networks
\cite{tran2019federated,chen2020convergence,shi2020device,wan2021convergence,nishio2019client,chai2020tifl, yang2019scheduling,jin2020resource,wang2020optimizing,van2021joint,wang2019adaptive,luo2020cost,tu2020network,wang2021device}. %
However,  {their optimization schemes did not consider
how client selection schemes influence the convergence behavior (e.g., the number of rounds for convergence) due to data heterogeneity and thus affect the total learning time.}%

In a nutshell, the fundamental limitation of existing works is the \emph{lack of joint consideration on the impact of the inherent system heterogeneity and statistical heterogeneity on client sampling}. %
In other words,  clients with {valuable data} may have poor computation and communication capabilities, whereas those who compute or communicate fast may have low-quality data.
This motivates us to study the following key question.

{\setlength{\parskip}{0.3em} \noindent\textbf{Key Question:} \emph{How to design an optimal client sampling scheme for wireless FL that tackles both system and statistical heterogeneity to achieve  fast convergence with respect to wall-clock time?}} %

{\setlength{\parskip}{0.3em}The challenge of the above question is threefold: {(1)  It is difficult to obtain an expected per-round training time for arbitrary client sampling probability due to the \emph{straggling effect} in wireless networks.} (2)  It is challenging to derive an \emph{analytical FL convergence result for arbitrary client sampling probability}. {(3)} The  total convergence time minimization problem can be non-convex  %
and with unknown coefficients, which we can only estimate during the learning process (known as the \emph{chicken-and-egg problem}).}

In light of the above discussion, we
state the main results and key contributions  of this paper
as follows:

\begin{itemize}
\item \emph{Optimal Heterogeneous Client Sampling for FL over Wireless Networks:}   We study how to design the optimal client sampling strategy in wireless networks to minimize FL convergence time.  To the best of our knowledge, this is the first work that aims to optimize the client sampling probability to address both clients' system and statistical heterogeneity.  

\item {\emph{Adaptive Bandwidth Allocation for Arbitrary Client Sampling:} %
Considering the
wireless bandwidth limitation and interference,  we propose an adaptive bandwidth allocation scheme for arbitrary sampled clients to  address the \emph{straggling} effect in each round.
For arbitrary client sampling probability, we characterize the expected time per-round and identify the impact of clients' heterogeneous computation and communication capacities as well as the limited system bandwidth.}

    \item {\emph{Optimization Algorithm with Convergence Guarantee:}   We provide a generic methodology to conduct FL wall-clock time optimization for arbitrary client sampling probability while ensuring convergence.
  A step in this method is to make a minor modification to the existing FedAvg algorithm and its convergence bound. 
    This enables us to formulate an offline   
  non-convex optimization problem 
  with respect to the client sampling probability 
  without actually training the model, where the problem captures both system and statistical heterogeneity. 
   We propose a low-cost substitute sampling approach to learn the convergence-related unknown parameters and develop an efficient algorithm to approximately solve the non-convex problem with low computational complexity.}%

    \item \emph{
    Optimal Client Sampling Principle:}    %
    {Our solution characterizes the impact of system heterogeneity (e.g., clients' computation and communication time) and statistical heterogeneity (e.g., clients' data quantity and quality) on the optimal client sampling probability.} Moreover,  {our solution reveals that as the number of sampled clients increases, the total convergence time first decreases and then increases because %
    a larger sampling number reduces the number of rounds for convergence  but results in a longer expected  time per-round due to  limited wireless bandwidth.}

    \item \emph{Simulation and Prototype Experimentation:} We evaluate the performance of our proposed algorithms through both  %
  a simulated environment and a hardware prototype. %
Experimental results from both real and synthetic datasets  demonstrate that %
our proposed sampling scheme can significantly  reduce  the  convergence  time %
compared to several baseline sampling schemes. %
For example, for hardware prototype with EMNIST dataset,  our sampling scheme spends $71$\% less time than baseline uniform sampling for reaching the same target loss.
{We also show that our proposed sampling performs well even when the strong convexity assumption is violated, e.g., for \emph{non-convex} loss function. We open sourced our experiment code at \url{https://github.com/WENLIXIAO-CS/WirelessFL}}.
\end{itemize}

\begin{table*}[!t]
	\centering
	\caption{Summary of key notations}
	\begin{tabular}{l|l||l|l}
		\toprule[1pt] 
		$F\left( \mathbf{w} \right)$ & Global loss function &
	    	$\mathbf{w}^*$ & Optimal model parameter that minimizes $F\left( \mathbf{w} \right)$  \\
			${F_i}\left( \mathbf{w} \right)$ & Local loss function of client $i$&
		$\mathbf{w}^{(R)}$ & Final model parameter after $R$ rounds\\
		$N$ & Total number of clients&
			$R$ & Final round number for achieving $\epsilon$ \\
					$r$ & Round number index &
			$\epsilon$ & Desired precision with $\Expect[F(\mathbf{w}^{(R)})]-F^{*} \le \epsilon$ \\
	$\boldsymbol{q}$ & Client sampling probability, i.e., $\{q_1,  \ldots, q_N\}$ &
			$K$ & Number of selected clients with $K\!:=\! \left| \mathcal{K}^{(r)}\!\right|\!$ \\
$E$ & Number of local iteration steps	   &
	$\mathcal{K}^{(r)}(\boldsymbol{q})$ & Randomly selected clients in round $r$ with sampling probability  $\boldsymbol{q}$\\
		$T^{(r)}(\boldsymbol{q})$ &  Round time in round $r$ with sampling probability $\boldsymbol{q}$  &
		$T_{\text{tot}}(\boldsymbol{q})$ &  Total learning time under sampling probability  $\boldsymbol{q}$  \\
		$p_i$ &  Client $i$'s datasize weight  &
			$t_i$ &  Client $i$'s  communication time  with unit bandwidth\\
	$f_i^{(r)}$ &  Client $i$'s allocated bandwidth in round $r$  &
				$\tau_i$ &  Client $i$'s computation  time for local model computing \\
				$f_{\text{tot}}$  & Total system bandwidth &
				$G_i$  & {Expected norm of client $i$'s stochastic gradient bound}\\
		$\alpha$, $\beta$ & Unknown constants in %
		convergence bound &
	$R_{\mathbf{q_1},s}$ & Number of rounds for reaching $F_s$ with sampling probability  $\mathbf{q_1}$  \\
		$F_s$ & Pre-defined global loss for estimation $\alpha$ and  $\beta$	&
	$R_{\mathbf{q_2},s}$ & Number of rounds for reaching $F_s$ with sampling probability  $\mathbf{q_2}$  \\
\bottomrule[1pt]
	\end{tabular}%
  \label{keynotation}
\end{table*}%

\section{Related Work}
Active client sampling and selection play a crucial role in addressing the statistical and system heterogeneity challenges in cross-device FL. 
In the existing literature, the research efforts in speeding up the training process mainly focus on two aspects:  %
importance sampling and resource-aware optimization-based approaches.%

The goal of the importance sampling is %
to reduce the high variance in traditional stochastic optimization algorithms of  stochastic  gradient  descent (SGD), where data samples are drawn uniformly at random during the  learning process (e.g., \cite{zhao2015stochastic,needell2014stochastic,alain2015variance,stich2017safe,gopal2016adaptive}). The intuition of importance sampling is to estimate a random variable by seeing important examples more often, but use them less. 
Recent works have adopted this idea in FL systems to improve communication efficiency via reducing the variance of the aggregated model update. %
Specifically, clients with ``important" data would have higher probabilities to be sampled  in each round. For example, existing works use clients' local gradient information (e.g., \cite{chen2020optimal,rizk2020federated,nguyen2020fast}) or local losses (e.g., \cite{cho2020client, pmlr-v139-fraboni21a,fraboni2021impact}) to measure the importance of clients' data.
However, these schemes did not consider the  speed of error convergence with respect to \emph{wall-clock time}, especially the straggling effect due to heterogeneous transmission delays. %

Another line of works aims to minimize wall-clock time via resource-aware optimization-based approaches in mobile edge networks, %
such as CPU frequency allocation (e.g., \cite{tran2019federated}),  and communication bandwidth allocation  (e.g., \cite{chen2020convergence,shi2020device,wan2021convergence}), straggler-aware client scheduling    (e.g., \cite{nishio2019client,chai2020tifl, yang2019scheduling,jin2020resource,wang2020optimizing,van2021joint}), parameters control (e.g., \cite{wang2019adaptive,luo2020cost}), and {task offloading (e.g., \cite{tu2020network,wang2021device})}. 
While these papers provided some novel insights,  %
their optimization approaches %
did not show how client sampling could affect the total convergence time and thus are orthogonal to our work.%

Unlike all the above-mentioned works, our work focuses on how to design the %
optimal client sampling strategy that 
tackles both system and statistical heterogeneity to minimize the  wall-clock  time  with  convergence guarantees. %
In addition, most existing works on FL are based on computer  simulations. In contrast, we implement our algorithm in an actual hardware prototype with resource-constrained devices, which allows us to capture real heterogeneous system operation time. %

The organization of the rest of the paper is as follows.  Section~\ref{sec:systemModel} introduces the system model and problem formulation. Section~\ref{sec:convergence} presents our new error-convergence  bound  with arbitrary client sampling.   Section~\ref{sec:optimizationProblem} gives the  optimal client sampling algorithm and solution insights.  %
 Section~\ref{sec:experimentation} provides the simulation and prototype experimental results. We conclude this paper in Section~\ref{sec:conclusion}.

\section{%
Preliminaries and System Model}
\label{sec:systemModel}

We start by  summarizing the basics of FL and its de facto algorithm FedAvg with unbiased client sampling. Then, we introduce the proposed adaptive client sampling in wireless networks with statistical and system heterogeneity based on FedAvg.  Finally, we present our formulated optimization problem. We summarize all key notations in this
paper in Table~\ref{keynotation}.

\subsection{Federated Learning (FL)}%

Consider a federated learning system involving a set of $\mathcal{N}={1,\ldots, N}$ clients, coordinated by a central server. Each client $i$ has $n_i$ local training data samples ($\mathbf{x}_{i, 1}, \ldots, \mathbf{x}_{i, n_{i}}$), and the total number of training data across $N$ devices is $n_\textnormal{tot} :=\sum\nolimits_{i = 1}^N n_i$. 
Further, define $f(\cdot,\cdot)$ as a loss function where ${{f}\left( \mathbf{w}; \mathbf{x}_{i, j} \right)}$ indicates how the machine learning model parameter $\mathbf{w}$ performs on the input data sample $\mathbf{x}_{i, j}$. Thus, the local loss function of client $i$ can be defined as 
\begin{equation}
\label{lo_ob}
{F_i}\left( \mathbf{w} \right) := \frac{1}{{{n_i}}}\sum\nolimits_{j =1}^{n_i} {{f}\left( \mathbf{w}; \mathbf{x}_{i, j} \right)}.
\end{equation}
Denote $p_i=\frac{n_i}{n_\textnormal{tot}}$ as the weight of the $i$-th device such that $\sum\nolimits_{i = 1}^N p_i=1$. Then, by denoting $F\left( \mathbf{w} \right)$ as the global loss function, the goal of FL is to solve the following optimization problem 
\cite{kairouz2019advances}:
\begin{equation}
\label{gl_ob}
\min_{\mathbf{w}}  F\left( \mathbf{w} \right) :=\sum\nolimits_{i = 1}^N{p_i}{F_i}\left( \mathbf{w} \right).
\end{equation}

The most popular and de facto optimization algorithm to solve \eqref{gl_ob} is FedAvg \cite{mcmahan2017communication}. Here, denoting $r$ as the index of the number of communication rounds, %
we  describe one round (e.g., the $r$-th) of the FedAvg algorithm as follows: 
\begin{enumerate}
    \item The server \emph{uniformly at random samples} a subset of  $K$  clients (i.e., $K\! :=\! \left| \mathcal{K}^{(r)}\right|$ with $\mathcal{K}^{(r)}\! \subseteq\!  \mathcal{N}$) and {broadcasts} the latest model $\mathbf{w}^{(r)}$ to the selected clients. 
    \item{Each sampled client $i$ chooses $\mathbf{w}_i^{(r,0)}\!=\!\mathbf{w}^{(r)}$, and runs $E$ steps\footnote{$E$ is originally defined as epochs of SGD in \cite{mcmahan2017communication}. In this paper, we denote $E$ as the number of local iterations for theoretical analysis.} {of local SGD} on~\eqref{lo_ob} to compute an updated model $\mathbf{w}_i^{(r,E)}$.} 
    
    \item {Each sampled client $i$ lets $\mathbf{w}_i^{(r+1)}\!=\!\mathbf{w}_i^{(r,E)}$ and sends it back to the server.}
     
    \item  The server \emph{aggregates} (with weight $p_i$) the clients' updated model  and computes a new global model $\mathbf{w}^{(\tau+1)}$.%
\end{enumerate}

The above process repeats for many rounds until the global loss converges. %

Recent works have demonstrated the effectiveness of FedAvg with theoretical convergence guarantees  in various settings \cite{%
li2019convergence,haddadpour2019convergence,karimireddy2019scaffold,yang2021achieving, qu2020federated}. {However, these works 
assume that the server samples  clients either uniformly at random or proportional to data size, %
which  may slow down the  wall-clock time for convergence %
due to the straggling effect {e.g., clients’  non-i.i.d. data and diverse computational and communication capabilities}. %
 Thus, a careful client sampling design should tackle both system and statistical heterogeneity for fast convergence.}

\begin{figure}[!t]
	\centering
	\includegraphics[width=9cm,height=6cm]{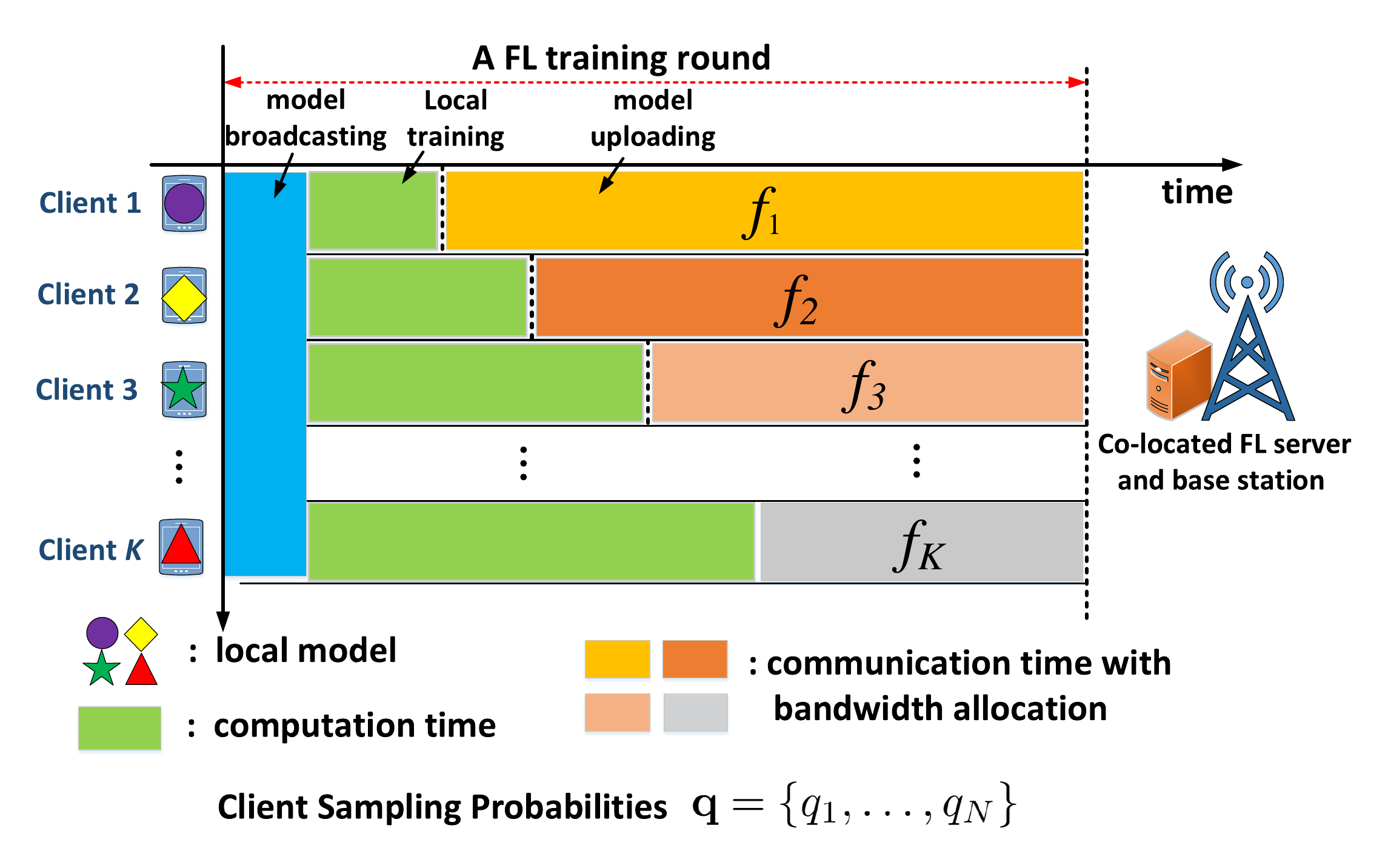}
 	\vspace{-1mm}
	\caption{{A heterogeneous federated learning training round over wireless networks, where $K$ out of $N$ clients are sampled according to the probability distribution  $\boldsymbol{q}=\{q_1, \ldots, q_i, \ldots, q_N\}$, with each sampled client $i$ being allocated bandwidth $f_i$.}}%
	\label{fig:intro2}
\end{figure}

\subsection{{FL Client Sampling over Wireless Networks}}%

{
We aim to sample clients according to a probability distribution 
$\boldsymbol{q}\!=\!\{q_i, \forall i\!\in\!\mathcal{N}\}$, where $0\!\le\! q_i\!\le\!1$   and $\sum_{i=1}^Nq_i\!=\!1$. Through  optimizing $\boldsymbol{q}$, we want to 
address system and statistical heterogeneity %
so as to minimize the wall-clock time for convergence. We describe the system model as follows.
\subsubsection{Client Sampling Model} 
We consider a standard FL setting where the training data are distributed in an unbalanced and  non-i.i.d. fashion among clients. 
Following recent works \cite{li2018federated,  
li2019convergence,haddadpour2019convergence,karimireddy2019scaffold,yang2021achieving, qu2020federated}, {We assume that the server creates the sampled client set $\mathcal{K}^{(r)}(\boldsymbol{q})$ by sampling $K$ times, with replacement, from the total $N$ clients for analytically tractable results. In this scenario, a client may appear multiple times in $\mathcal{K}^{(r)}(\boldsymbol{q})$. The aggregation weight for each client $i$ corresponds to the frequency of its appearance in $\mathcal{K}^{(r)}(\boldsymbol{q})$.}}

{\subsubsection{System Heterogeneity Model}
As illustrated in Fig.~1,  due to system bandwidth limitation and wireless interference, we assume that the sampled clients are scheduled in a frequency-sharing manner,   
and each particular client's  com{p}utation time  for local model updates is static throughout the learning period as \cite{shi2020device, chen2020convergence,yang2020energy}. %
 \emph{Nevertheless, %
each client's  communication time  for model uploading varies across rounds, as the communication time depends on the allocated bandwidth in each round}.\footnote{We do not consider the server's downlink time for model broadcasting and model aggregation, as we mainly focus on the performance bottleneck of the  battery-constrained edge devices.} %
For example, the server may allocate different bandwidth for the same client in different rounds due to the randomly changing  sampled client set. 
}%

{\subsubsection{Joint Computation and Communication Design} 
We consider the mainstream synchronized\footnote{As suggested \cite{bonawitz2016practical,avent2017blender,konevcny2016federated}, we consider mainstream synchronized FL in this paper %
due to its composability with other
techniques (such as secure aggregation protocols and differential privacy).} FL framework (e.g., \cite{mcmahan2017communication, bonawitz2019towards,   
li2019convergence,haddadpour2019convergence,karimireddy2019scaffold,yang2021achieving, qu2020federated}), where the server waits to collect and aggregate all  sampled clients' model updates  before  entering the next round. For  FL operating in the synchronized fashion,  the per-round time is limited by the slowest client (known as straggler).  
\emph{We can prove that for any sampling probability  $\boldsymbol{q}$, we will obtain a minimum round time $T^{(r)}(\boldsymbol{q})$ when the sampled $K$ clients complete their local model computation and model transmission at the same time.\footnote{Otherwise, simply allocating a certain amount of bandwidth from the earlier completed client to the straggler would yield a shorter round time.}} 
In this way, consider sampled client $i$ at round $r$, i.e., $i \in \mathcal{K}^{(r)}(\boldsymbol{q})$, and denote $\tau_i$ as its computation time, $t_i$ as its  communication time when allocated with one unit bandwidth, and $f_i^{(r)}$ as its allocated bandwidth at round $r$. Then, we have  
\begin{equation}
\label{per_round_wireless}
\tau_i+\frac{t_i}{f_i^{(r)}}=T^{(r)}(\boldsymbol{q}), \forall i \in \mathcal{K}^{(r)}(\boldsymbol{q}). 
\end{equation}
Note that for a particular client, the bandwidth allocation can be different across rounds due to the randomly sampled client set in each round. 
For a given round $r$, %
we substitute $f_i^{(r)}$ from \eqref{per_round_wireless} into the total bandwidth constraint  $\sum_{i=1}^Kf_i^{(r)}=f_{tot}$, which leads to
\begin{equation}
\label{total_bandwidth_wireless}
\sum_{i=1}^{K}\frac{t_i}{T^{(r)}(\boldsymbol{q})-\tau_i}=f_{tot}, \forall i \in \mathcal{K}^{(r)}(\boldsymbol{q}).
\end{equation}
However, deriving the analytical solution of $T^{(r)}(\boldsymbol{q})$ from \eqref{total_bandwidth_wireless} is difficult as we can only numerically compute $T^{(r)}(\boldsymbol{q})$ in the non-linear (with the order of $K$) equation. %
}

{Based on the above analysis, we write the total learning time $T_\textnormal{tot}(\boldsymbol{q},R)$ after $R$ rounds %
as %
\begin{equation}
\label{Ttot}
T_\textnormal{tot}(\boldsymbol{q},R)=\sum\nolimits_{r=1}^{(r)}T^{(r)}(\boldsymbol{q}).%
\end{equation}}
Then, we will show the formulated optimization problem in the next subsection.

\subsection{Problem Formulation} 
Our goal is to %
minimize the {expected} total learning time $\Expect[T_\textnormal{tot}(\boldsymbol{q},R)]$, while ensuring that the expected global loss  $\Expect[F\left(\mathbf{w}^{(R)}(\boldsymbol{q})\right)]$ converges to the minimum value $F^{*}$ with an $\epsilon$ precision, 
with $\mathbf{w}^{(R)}(\boldsymbol{q})$ being the aggregated global model after $R$ rounds with client sampling probability $\boldsymbol{q}$. %
This translates into the following problem:
\begin{equation}\begin{array}{cl}
\label{ob1}
\!\!\!\!\!\!\!\textbf{P1:}\quad \min_{\boldsymbol{q}, R} & \Expect[T_\textnormal{tot}(\boldsymbol{q},R)] \\
\quad\quad \text { s.t. } & \Expect[F\left(\mathbf{w}^{(R)}(\boldsymbol{q})\right)]-F^{*} \le \epsilon,\\
&\sum_{i=1}^Nq_i=1, \\
& q_i>0,  \forall i \in \mathcal{N}, %
\ R \in \mathbb{Z}^{+}. 
\end{array}\end{equation}
 The expectation in $\Expect[T_\textnormal{tot}(\boldsymbol{q},R)]$ and $\Expect[F\left(\mathbf{w}^{(R)}(\boldsymbol{q})\right)]$ in \eqref{ob1} is due to the randomness in client sampling probability  $\boldsymbol{q}$ and local SGD. %
Solving Problem \textbf{P1}, however, is challenging in two aspects:
\begin{enumerate}
    \item It is generally impossible to find out how  $\boldsymbol{q}$ and  $R$  affect the final model $\mathbf{w}^{(R)}(\boldsymbol{q})$ and the corresponding loss function  $\Expect[F\left(\mathbf{w}^{(R)}(\boldsymbol{q})\right)]$ before actually training the model. %
    Hence, we need to obtain an analytical expression with respect to $\boldsymbol{q}$ and  $R$  to predict how they affect $\mathbf{w}^{(R)}(\boldsymbol{q})$ and $\Expect[F\left(\mathbf{w}^{(R)}(\boldsymbol{q})\right)]$. %

\item %

{It is difficult to optimize the objective  $\Expect[T_\textnormal{tot}(\boldsymbol{q},R)]$ directly
because the analytical expression of the round time $T^{(r)}(\boldsymbol{q})$  in \eqref{total_bandwidth_wireless} is not available.  Hence, it is difficult to analyze how clients' heterogeneous computation time (i.e., $\tau_i$),  communication time (i.e., $t_i$), and the total bandwidth affect the optimal client sampling.   %
Even for the simplest case, e.g., $K=1$, %
Problem \textbf{P1} can be %
 non-convex  %
 as we will show later.}%

\end{enumerate} 

In Section~\ref{sec:convergence} and Section~\ref{sec:optimizationProblem}, we address these two challenges, respectively, and propose approximate  algorithms to find an approximate solution to %
Problem \textbf{P1} efficiently.   %

\section{Convergence Bound for Arbitrary Client Sampling}
\label{sec:convergence}
In this section, we address the first challenge by %
deriving a new tractable error-convergence bound  for  arbitrary 
client sampling probability. %

\subsection{{Machine Learning Model Assumptions}} To ensure a tractable convergence analysis, %
we first state several assumptions on the local objective functions ${F_i}\left( \mathbf{w} \right)$.
\newtheorem{assumption}{{Assumption}}
\begin{assumption}
L-smooth: For each client $i \in \mathcal{N}$, $F_{i}$ is  $L$-smooth for all $\mathbf{v}$ and $\mathbf{w}$, 
$\|\nabla {F_i}(\mathbf{v})-\nabla {F_i}(\mathbf{w})\| \leq L\|\mathbf{v}-\mathbf{w}\|$.
\end{assumption}
\begin{assumption}
Strongly-convex:  For each client $i \in \mathcal{N}$, $F_{i}$ is $\mu$-strongly convex for all $\mathbf{v}$ and $\mathbf{w}, F_{i}(\mathbf{v}) \geq F_{i}(\mathbf{w})+(\mathbf{v}-$
$\mathbf{w})^{T} \nabla F_{i}(\mathbf{w})+\frac{\mu}{2}\|\mathbf{v}-\mathbf{w}\|_{2}^{2}$.
\end{assumption}
\begin{assumption}
Bounded local variance: %
For each client $i\in \mathcal{N}$ after the $\tau$th iteration, the variance of its  stochastic gradient is bounded: $\mathbb{E}\left\|\nabla F_{i}\left(\mathbf{w}_i^{(\tau)}, \xi_{i}^{(\tau)}\right)\!-\!\nabla F_{i}\left(\mathbf{w}_i^{(\tau)}\right)\right\|^{2} \leq \sigma_{i}^{2}$.
\end{assumption}
Assumptions 1--3 are %
common in many existing studies of  convex FL problems, such as  $\ell_{2}$-norm regularized linear regression, logistic regression (e.g., \cite{li2019convergence,yu2018parallel,chen2020optimal,stich2018local,qu2020federated,cho2020client}).  %
Nevertheless, the experimental results to be presented
in Section VI show that our results also works well for \emph{non-convex}
function such as neural network.

\subsection{{Aggregation with Arbitrary Client Sampling Probability}} 
This section shows how to aggregate clients' model update under sampling probability $\boldsymbol{q}$, such that the aggregated global model is unbiased comparing with that with full client participation, which leads to our  convergence result. 

We first define the \emph{virtual weighted aggregated model with full client participation} in round $r$ as
\begin{equation}
    \label{full_sample}
    \overline{\mathbf{w}}^{(r+1)}:=\sum\nolimits_{i=1}^Np_iw_i^{(r+1)}.
\end{equation}
With this, we can derive the following result. \newtheorem{lemma}{{Lemma}}
\begin{lemma}
\label{adaptive_sam_agg}
\textbf{(Adaptive Client Sampling and Model Aggregation)} %
When clients $\mathcal{K}^{(r)}(\boldsymbol{q})$ are sampled with probability $\boldsymbol{q}=\{q_1, \ldots q_N\}$ and their local updates 
are aggregated as {\begin{equation}
    \label{aggregation}
 \mathbf{w}^{(r+1)} \leftarrow \mathbf{w}^{(r)}+\sum_{{j} \in \mathcal{K}^{(r)}(\boldsymbol{q})} \frac{p_{{j}}}{K q_{{j}}} \left(\mathbf{w}_{{j}}^{(r+1)}-\mathbf{w}^{(r)}\right),
\end{equation}}
then we have  
\begin{equation}
    \label{unbiased_agg}
    \Expect_{\mathcal{K}^{(r)}(\boldsymbol{q})}[\mathbf{w}^{(r+1)}]= \overline{\mathbf{w}}^{(r+1)}.
\end{equation}
\end{lemma}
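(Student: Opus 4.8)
The plan is to exploit the fact that $\mathcal{K}^{(r)}(\boldsymbol{q})$ is obtained by sampling $K$ times \emph{with replacement}, so that the summation in \eqref{aggregation} is really a sum over $K$ independent and identically distributed draws. Concretely, I would let $j_1,\ldots,j_K$ denote the $K$ independent sampled indices, each taking value $i\in\mathcal{N}$ with probability $q_i$, and rewrite the update as
\[
\mathbf{w}^{(r+1)}=\mathbf{w}^{(r)}+\sum_{k=1}^{K}\frac{p_{j_k}}{Kq_{j_k}}\left(\mathbf{w}_{j_k}^{(r+1)}-\mathbf{w}^{(r)}\right).
\]
This reformulation is the crux: it converts the multiset summation in \eqref{aggregation} (whose multiplicities act as aggregation weights) into a clean sum of exchangeable terms on which the expectation factorizes cleanly.

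Next I would take $\Expect_{\mathcal{K}^{(r)}(\boldsymbol{q})}[\cdot]$ and push it through the finite sum by linearity, then evaluate the expectation of a single generic draw. Since each $j_k$ is distributed according to $\boldsymbol{q}$, and since $\mathbf{w}_i^{(r+1)}$ and $\mathbf{w}^{(r)}$ are deterministic given the round-$r$ state, the key cancellation appears:
\[
\Expect\left[\frac{p_{j_k}}{Kq_{j_k}}\left(\mathbf{w}_{j_k}^{(r+1)}-\mathbf{w}^{(r)}\right)\right]=\sum_{i=1}^{N}q_i\cdot\frac{p_i}{Kq_i}\left(\mathbf{w}_i^{(r+1)}-\mathbf{w}^{(r)}\right)=\frac{1}{K}\sum_{i=1}^{N}p_i\left(\mathbf{w}_i^{(r+1)}-\mathbf{w}^{(r)}\right).
\]
The sampling probability $q_i$ exactly cancels the $q_i$ in the importance weight, which is precisely why the reweighting factor $\tfrac{p_i}{Kq_i}$ is chosen this way.

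Finally I would sum the $K$ identical per-draw contributions, which cancels the $1/K$ factor, and simplify using $\sum_{i=1}^{N}p_i=1$ to eliminate the residual $\mathbf{w}^{(r)}$ terms, obtaining $\Expect[\mathbf{w}^{(r+1)}]=\mathbf{w}^{(r)}+\sum_{i=1}^N p_i\mathbf{w}_i^{(r+1)}-\mathbf{w}^{(r)}=\overline{\mathbf{w}}^{(r+1)}$ by definition \eqref{full_sample}. The only genuinely delicate point --- and the step I would treat most carefully --- is the bookkeeping of the with-replacement scheme: one must verify that the aggregation weight assigned to a client equals its empirical frequency among the $K$ draws (as stated in the client sampling model), so that the multiset sum in \eqref{aggregation} legitimately equals the $K$-term i.i.d. sum above. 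If sampling were done \emph{without} replacement, the draws would be dependent and this simple factorization would fail, so the with-replacement assumption is doing real work in the argument.
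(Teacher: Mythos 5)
Your proposal is correct and follows essentially the same route as the paper's own proof: both rewrite the aggregation as $K$ i.i.d.\ draws from $\boldsymbol{q}$, apply linearity of expectation so that each draw contributes $\frac{1}{K}\sum_{i=1}^N q_i\frac{p_i}{q_i}(\mathbf{w}_i^{(r+1)}-\mathbf{w}^{(r)})$, and use $\sum_{i=1}^N p_i=1$ together with the definition of $\overline{\mathbf{w}}^{(r+1)}$ to conclude. Your explicit remark on why with-replacement sampling is needed for the factorization is a nice addition that the paper leaves implicit.
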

\begin{proof}
Substituting \eqref{aggregation} into \eqref{unbiased_agg}, we have
{\begin{equation}
\label{prooflemma2}
\begin{array}{cl}
\Expect_{\mathcal{K}^{(r)}(\boldsymbol{q})}\!\!\left[\mathbf{w}^{(r\!+\!1)\!}\right]\!\!\!\!\!\!&=\!\mathbf{w}^{(r)}\!+\!\Expect\left[\sum\limits_{{j} \in \mathcal{K}^{(r)}(\boldsymbol{q})} \frac{p_{{j}}}{K q_{{j}}} \left(\mathbf{w}^{(r+1)}_{{j}}\!-\!\mathbf{w}^{(r)}\right)\right]
\\
&=\!\mathbf{w}^{(r)}\!+\!\frac{1}{K}\Expect\left[\sum\limits_{{{j}} \in \mathcal{K}^{(r)}(\boldsymbol{q})} \frac{p_{{j}}}{q_{{j}}} \left(\mathbf{w}^{(r\!+\!1)}_{{j}}\!-\!\mathbf{w}^{(r)}\right)\right]\\
&=\mathbf{w}^{(r)}\!+\! \frac{K}{K}\Expect_{{j}\in \mathcal{K}^{(r)}(\boldsymbol{q})}\left[\frac{p_{{j}}}{q_{{j}}}\left(\mathbf{w}^{(r\!+\!1)}_{{j}}\!-\!\mathbf{w}^{(r)}\right)\right]\\
&=\mathbf{w}^{(r)}\!+\!\sum\limits_{i=1}^Nq_i\frac{p_i}{q_i}\left(\mathbf{w}^{(r+1)}_{i}\!-\!\mathbf{w}^{(r)}\right)\\
&=\mathbf{w}^{(r)}\!+\!\sum_{i=1}^Np_i\left(\mathbf{w}^{(r+1)}_{i}\!-\!\mathbf{w}^{(r)}\right)\\
&=\mathbf{w}^{(r)}\!+\!\overline{\mathbf{w}}^{(r+1)}-\mathbf{w}^{(r)}=\overline{\mathbf{w}}^{(r+1)}.
\end{array}
\end{equation}}
\end{proof}

\textbf{Remark}: {The interpretation
of our {sampling} and {aggregation} is similar to that of \emph{importance sampling}. More specifically, {since we sample different clients with different probabilities (e.g., $q_i$ for client $i$),  \emph{we need to inversely re-weight  their {updated model's gradient}\footnote{{{Note that simply inversely weighted the model updates from the sampled clients does not yield an unbiased global model, i.e. $\Expect_{\mathcal{K}(\boldsymbol{q})^{(r)}}[\sum_{{j} \in \mathcal{K}(\boldsymbol{q})^{(r)}} \frac{p_{{j}}}{K q_{{j}}}\mathbf{w}_{{j}}^{(r+1)}] \neq \overline{\mathbf{w}}^{(r+1)}$. This is because the equality holds only when clients are sampled uniformly at random.}}}
in the aggregation step (e.g.,  ${1}/{q_i}$ for client $i$), such that the aggregated  model is still unbiased towards that with full client participation}.}} 
We summarize how  the  server  performs  client  sampling  and model  aggregation in 
Algorithm~\ref{FL_sample_agg}, where the main differences compared to the de facto FedAvg  in \cite{mcmahan2017communication} are the \emph{Sampling} (Step~\ref{alg:adaptivefedavgStep1}) and \emph{Aggregation} (Step~\ref{alg:adaptivefedavgStep4}) procedures. Notably, Algorithm~\ref{alg:adaptivefedavg} {recovers variants of the FedAvg algorithm as special cases as in reference [23]. For instance, when $q_i$ is set to $\frac{1}{N}$, it means uniform sampling with replacement. If $q_i$ is set to $p_i$, it refers to weighted sampling with replacement.}

\begin{algorithm}[t]
\label{FL_sample_agg}
\small
	\caption{FL with Arbitrary Client Sampling}
	\label{alg:adaptivefedavg}
	\KwIn{Sampling probability $\boldsymbol{q}=\{q_1, \ldots, q_N\}$, $K$, $E$, precision $\epsilon$, initial model $\mathbf{w_0}$}
	\KwOut{Final model parameter $\mathbf{w}^{(R)}$}
	\For{$r=0,1,2,..., R$%
	}{	\emph{Server randomly samples a subset of clients  $\mathcal{K}^{(r)}(\boldsymbol{q})$ {according to}   $\boldsymbol{q}$, and sends current global model   $\mathbf{w}^{(r)}$ to the selected clients\label{alg:adaptivefedavgStep1}\tcp*{\textbf{Sampling}}}

Each sampled client $i$ %
lets $\mathbf{w}_i^{(r,0)}\!=\!\mathbf{w}^{(r)}$, and  performs %
$\mathbf{w}_i^{(r,j+1)} \!\leftarrow\! \mathbf{w}_i^{(r,j)}-\eta^{(r)} \nabla F_{k}\left(\mathbf{w}_i^{(r,j)}, \xi_i^{(r,j)}\right),j\!=\!0,1,\dots, E\!-\!1$,
and lets $\mathbf{w}_i^{(r+1)}=\mathbf{w}_i^{(r,E)}$		\label{alg:adaptivefedavgStep2}\tcp*{Computation}
		
	Each sampled client $i$ %
	sends back updated model $\mathbf{w}_i^{(r+1)}$ to the server
	\label{alg:adaptivefedavgStep3}\tcp*{Communication}

\emph{Server computes a new global model parameter as  $\mathbf{w}^{(r+1)} \leftarrow \mathbf{w}^{(r)}+\sum_{i \in \mathcal{K}(\boldsymbol{q})^{(r)}} \frac{p_{i}}{K q_{i}} \left(\mathbf{w}_i^{(r+1)}-\mathbf{w}^{(r)}\right)$ \label{alg:adaptivefedavgStep4}\tcp*{\textbf{Aggregation}}}

	}
\end{algorithm}

\subsection{{Convergence Result for Arbitrary Client Sampling}} 
Based on %
Lemma~\ref{adaptive_sam_agg}, we present the main convergence result for arbitrary client sampling probability $\boldsymbol{q}=\{q_1, \ldots, q_N\}$ and required number of rounds $R$ in Theorem~\ref{convergencebound}.

\newtheorem{theorem}{{Theorem}}
\begin{theorem}\label{convergencebound}
\textbf{(Convergence Upper Bound)} Let Assumptions 1 to 3 hold, %
$\gamma=\max \{\frac{8 L}{\mu}, E\}$, decaying learning rate $\eta_{r}=\frac{2}{\mu(\gamma+r)}$, {and $G_i=\max_{r\in \{1,\ldots R\}}\left\|\nabla F_{i}\left(\mathbf{w}_{i}^{(r)}, \xi_{i}^{(r)}\right)\right\|$ denoting client i’s maximum gradient norm throughout the training rounds}. %
For given client sampling probability $\boldsymbol{q}=\{q_1, \ldots, q_N\}$ and the corresponding aggregation in Lemma~\ref{adaptive_sam_agg}, the global loss error after $R$ rounds satisfies %
\begin{equation}
    \label{convergence}
 \Expect[F\left(\mathbf{w}^{(R)}(\boldsymbol{q})\right)]-F^{*} \le\frac{1}{R}\left(\alpha{\sum_{i=1}^N\frac{p_i^2G_i^{2}}{Kq_i}} +\beta\right),
\end{equation}
where $\alpha\!=\!\frac{8LE}{\mu^2}$ and $\beta\!=\!\frac{2L}{\mu^2E}B+\frac{12L^2}{\mu^2E}\Gamma+ \frac{4L^2}{\mu E}\left\|\mathbf{w}_{0}\!-\!\mathbf{w}^{*}\right\|^{2}$, with $B\!=\!\sum\limits_{i=1}^{N} p_{i}^{2} \sigma_{i}^{2}\!+\!8\!\sum\limits_{i=1}^Np_iG_i^2E^{2}\!$ and $\Gamma\!=\!F^{*}\!-\!\sum_{i=1}^{N} p_{i} F_{i}^{*}$.  %
\end{theorem}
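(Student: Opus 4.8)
The plan is to track the expected squared distance $\Delta_r := \Expect\|\mathbf{w}^{(r)} - \mathbf{w}^*\|^2$ and derive a one-step recursion of the form $\Delta_{r+1} \le (1 - \eta_r\mu)\Delta_r + \eta_r^2 C$, then unroll it under the decaying step size. The crucial first move exploits the unbiasedness established in Lemma~\ref{adaptive_sam_agg}: writing $\mathbf{w}^{(r+1)} - \mathbf{w}^* = (\overline{\mathbf{w}}^{(r+1)} - \mathbf{w}^*) + (\mathbf{w}^{(r+1)} - \overline{\mathbf{w}}^{(r+1)})$ and taking expectation over the sampling $\mathcal{K}^{(r)}(\boldsymbol{q})$, the cross term vanishes because $\Expect_{\mathcal{K}^{(r)}(\boldsymbol{q})}[\mathbf{w}^{(r+1)}] = \overline{\mathbf{w}}^{(r+1)}$. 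This yields the clean decomposition
\begin{equation*}
\Expect\|\mathbf{w}^{(r+1)} - \mathbf{w}^*\|^2 = \Expect\|\overline{\mathbf{w}}^{(r+1)} - \mathbf{w}^*\|^2 + \Expect\|\mathbf{w}^{(r+1)} - \overline{\mathbf{w}}^{(r+1)}\|^2,
\end{equation*}
separating the full-participation progress (first term) from the extra variance injected by partial sampling (second term).

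For the first term I would follow the standard full-participation FedAvg analysis of \cite{li2019convergence}. Using $L$-smoothness, $\mu$-strong convexity, and the bounded-variance Assumption~3, I would bound the one-step progress of the virtual sequence $\overline{\mathbf{w}}^{(r+1)}$ by a quantity of the form $(1 - \eta_r\mu)\Delta_r + \eta_r^2(B + 6L\Gamma)$, where the heterogeneity gap $\Gamma = F^* - \sum_{i=1}^N p_i F_i^*$ arises when bounding the drift of the local iterates $\mathbf{w}_i^{(r,j)}$ away from $\mathbf{w}^{(r)}$ across the $E$ local steps, and $B$ collects the stochastic-gradient variance $\sum_{i=1}^N p_i^2\sigma_i^2$ together with the gradient-norm contribution $8\sum_{i=1}^N p_i G_i^2 E^2$. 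This part ultimately supplies the constant $\beta$.

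The novel part, which I would treat most carefully, is the sampling-variance term. Since $\mathcal{K}^{(r)}(\boldsymbol{q})$ is formed by $K$ i.i.d. draws with replacement, I can write $\mathbf{w}^{(r+1)} - \mathbf{w}^{(r)} = \frac{1}{K}\sum_{k=1}^{K} X_k$ with i.i.d. summands $X_k = \frac{p_{j_k}}{q_{j_k}}(\mathbf{w}_{j_k}^{(r+1)} - \mathbf{w}^{(r)})$, each having mean $\overline{\mathbf{w}}^{(r+1)} - \mathbf{w}^{(r)}$. Independence then gives $\Expect\|\mathbf{w}^{(r+1)} - \overline{\mathbf{w}}^{(r+1)}\|^2 = \frac{1}{K}\mathrm{Var}(X_1) \le \frac{1}{K}\Expect\|X_1\|^2 = \frac{1}{K}\sum_{i=1}^N \frac{p_i^2}{q_i}\|\mathbf{w}_i^{(r+1)} - \mathbf{w}^{(r)}\|^2$, where one power of $q_i$ from the sampling law cancels against the re-weighting factor $p_i/(K q_i)$. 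Bounding each local update by $\|\mathbf{w}_i^{(r+1)} - \mathbf{w}^{(r)}\| \le \eta_r E G_i$ through the maximum-gradient-norm definition of $G_i$ produces precisely $\eta_r^2 \frac{E^2}{K}\sum_{i=1}^N \frac{p_i^2 G_i^2}{q_i}$, the source of the $\alpha\sum_{i=1}^N \frac{p_i^2 G_i^2}{K q_i}$ term.

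Combining the two bounds gives the target recursion $\Delta_{r+1} \le (1 - \eta_r\mu)\Delta_r + \eta_r^2 C$, with $C$ absorbing both the $\beta$-type terms and the $\boldsymbol{q}$-dependent sampling term. Finally, with $\eta_r = \frac{2}{\mu(\gamma+r)}$ and $\gamma = \max\{8L/\mu, E\}$, I would show by induction that $\Delta_r \le \frac{\nu}{\gamma+r}$ for a suitable $\nu$ fixed by $C$ and $\Delta_0$, and then convert back to the loss via $L$-smoothness, $\Expect[F(\mathbf{w}^{(R)})] - F^* \le \frac{L}{2}\Delta_R$, simplifying the $\frac{1}{\gamma+R}$ factor into the stated $\frac{1}{R}$ bound while reading off $\alpha = \frac{8LE}{\mu^2}$ and $\beta$. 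The main obstacle I anticipate is not a single step but the bookkeeping in the drift/heterogeneity estimate of the second paragraph: controlling the $E$ local SGD steps under non-i.i.d. data while keeping every constant consistent so that the final $\alpha$ and $\beta$ match exactly, and verifying that the bounded-gradient surrogate $G_i$ is applied uniformly in both the drift term and the sampling-variance term.
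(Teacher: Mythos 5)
Your proposal is correct, and its key novel ingredient --- the unbiasedness from Lemma~\ref{adaptive_sam_agg} plus the with-replacement variance computation $\Expect\|\mathbf{w}^{(r+1)}-\overline{\mathbf{w}}^{(r+1)}\|^2 \le \frac{1}{K}\sum_{i=1}^N\frac{p_i^2}{q_i}\Expect\|\mathbf{w}_i^{(r+1)}-\mathbf{w}^{(r)}\|^2 = O(\eta_r^2E^2)\cdot\frac{1}{K}\sum_{i=1}^N\frac{p_i^2G_i^2}{q_i}$ --- is exactly the one the paper uses. Where you diverge is in how this variance is combined with the full-participation analysis. You fold it into a one-step recursion on $\Delta_r=\Expect\|\mathbf{w}^{(r)}-\mathbf{w}^*\|^2$ via the orthogonal decomposition, unroll by induction, and apply $L$-smoothness once at the end; this is the route of \cite{li2019convergence} for partial participation, and it accounts for the sampling noise injected at \emph{every} round. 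The paper instead decomposes in loss space at the final round only: it cites the full-participation bound $\Expect[F(\overline{\mathbf{w}}^{(R)})]-F^*\le\beta/R$ as a black box and adds $\Expect[F(\mathbf{w}^{(R)}(\boldsymbol{q}))]-\Expect[F(\overline{\mathbf{w}}^{(R)})]\le\frac{L}{2}\Expect\|\mathbf{w}^{(R)}-\overline{\mathbf{w}}^{(R)}\|^2$ (the first-order term vanishing by unbiasedness), evaluated with the round-$R$ step size. Your route costs more bookkeeping, since you must re-derive the full-participation recursion rather than quote it, but it is the more self-contained and arguably more rigorous assembly: the paper's virtual sequence $\overline{\mathbf{w}}^{(r)}$ is re-anchored to the sampled iterate at each round, so it is not literally a full-participation trajectory, and your recursion handles the resulting error propagation explicitly. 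Two constants to watch so that $\alpha$ and $\beta$ come out as stated: the within-round step sizes are non-increasing with $\eta_{r,1}\le 2\eta_{r,E}$, so the local-drift bound is $\|\mathbf{w}_i^{(r+1)}-\mathbf{w}^{(r)}\|\le 2\eta_{r,E}EG_i$ (yielding the factor $4$ the paper carries, not your $\eta_r EG_i$), and matching $\alpha=8LE/\mu^2$ requires the conversion $(\gamma-1+r)^2\ge rE$ when passing from the per-iteration count to the per-round count.
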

\begin{proof}
{
The proof includes three main steps. First, we present the convergence result for full client participation. Then, we show the model variance between the proposed sampling scheme and full client participation. Finally, we use show the expected loss error of between the proposed sampling scheme and full client participation using  L-smooth, which leads to the convergence result of proposed sampling scheme. We elaborate the details as following.}  

{First of all, following the similar proof of convergence under full client participation, we have the following result from the Theorem 1 in \cite{stich2018local,li2019convergence}: 
\begin{equation}
\begin{aligned}
\label{full_conv}
\Expect[F\left(\overline{\mathbf{w}}^{(R)}\right)]-F^{*} \le\frac{\beta}{R},
\end{aligned}
\end{equation}
where $\Expect[F\left(\overline{\mathbf{w}}^{(R)}\right)]$ is the  expected global loss after $R$ rounds with full client participation, and $\beta$ is the same as in \eqref{convergence}.} 

{Then, based on the derived result of Lemma $1$ for client sampling probabilities  $\boldsymbol{q}$, the expected aggregated global model $\Expect_{\mathcal{K}(\boldsymbol{q})^{(r)}}[\mathbf{w}^{r+1}]$ is unbiased compared to full participation $\overline{\mathbf{w}}^{r+1}$. Hence, we can show that the expected difference  of the two (sampling variance) is bounded as follows: 
{\begin{equation}
\begin{aligned}
       \label{bounded_variance}
   \operatorname{Var_{samp}}&= \mathbb{E}_{\mathcal{K}(\mathbf{q})^{(r)}}\left\|\mathbf{w}^{(r+1)}-\overline{\mathbf{w}}^{(r+1)}\right\|^{2}\\
    & \leq \frac{1}{K} \sum_{i=1}^N  \frac{p_i^2}{q_i}  \mathbb{E}\left\|\sum_{j=1}^{E} \eta_{r,j}\nabla F_i\left(\mathbf{w}_i^{(r,j)}, \xi_i^{(r,j)}\right)\right\|^2 \\
&\leq \frac{4\eta_{r, E}^2 E^2}{K} \sum_{i=1}^N  \frac{p_i^2G_i^{2}}{q_i}, 
\end{aligned}
\end{equation}
\noindent where $\mathbf{w}^{(r+1)}$ and $\overline{\mathbf{w}}^{(r+1)}$ are described in Lemma~1, and $\eta_{(r,j)}$ is the non-increasing learning rate for round $r$ at step $j$, with $\eta_{r,1}\le2\eta_{r,E}$.}} 

{Next, we use L-smooth to bound on  $\Expect[F\left(\mathbf{w}^{(r)}(\boldsymbol{q})\right)]- \Expect[F(\overline{\mathbf{w}}^{(r)})]$ as follows:
\begin{equation}
 \label{parti_full}
    \begin{aligned}
    \Expect[F\!\left(\!\mathbf{w}^{(r)}(\boldsymbol{q})\!\right)]-& \Expect[F(\overline{\mathbf{w}}^{(r)})] \leq \frac{L}{2}\mathbb{E}_{\mathcal{K}(\mathbf{q})^{(r)}}\left\|\mathbf{w}^{(r+1)}\!-\!\overline{\mathbf{w}}^{(r+1)}\right\|^{2}\\
    &\leq \frac{L}{2}\frac{4}{K}\sum_{i=1}^N\frac{p_i^2G_i^2}{q_i}(\eta^{r-1}E)^2\\
    &=\frac{2LE^2}{K}\frac{4}{\mu^2(\gamma-1+r)^2}\sum_{i=1}^N\frac{p_i^2G_i^2}{q_i}\\
    &\leq\frac{8LE^2}{\mu^2KrE}\sum_{i=1}^N\frac{p_i^2G_i^2}{q_i}\\
    &=\frac{8LE}{\mu^2Kr}\sum_{i=1}^N\frac{p_i^2G_i^2}{q_i}.
    \end{aligned}
\end{equation}}
{Finally, by letting $r=R$ in equation \eqref{parti_full} and adding \eqref{full_conv},  we have 
\begin{equation}
  \Expect[F\left(\mathbf{w}^{(R)}(\boldsymbol{q})\right)]-F^{*} \le\frac{1}{R}\left(\alpha{\sum_{i=1}^N\frac{p_i^2G_i^{2}}{Kq_i}} +\beta\right),
\end{equation}
which concludes the proof of Theorem 1. We observe that the main difference of the contraction bound compared to full client participation is the sampling variance in \eqref{bounded_variance}, which yields the additional term   $\alpha{\sum_{i=1}^N\frac{p_i^2G_i^{2}}{q_i}}$ in \eqref{convergence}.}
\end{proof}
We summarize the key insights of Theorem~1 as follows:

\begin{itemize}
   \item 
The convergence bound in \eqref{convergence} establishes the relationship between client sampling probability $\boldsymbol{q}$ and the number of rounds $R$ for the expected loss $\Expect[F\left(\mathbf{w}^{(R)}(\boldsymbol{q})\right)]$ reaching the target $\epsilon$  precision. %
 Notably, the derived bound %
 generalizes  the convergence results in \cite{li2019convergence}, where clients are uniformly sampled ($q_i={1}/{N}$) or weighted sampled ($q_i=p_i$). %
\item The convergence bound in \eqref{convergence}   indicates that in order to obtain an unbiased global model, all clients need to be sampled 
with non-zero probability for model convergence, i.e., $q_i>0$, for all client $i$. This is because when $q_i\rightarrow0$, it will take infinite number
of rounds for convergence.

\item The convergence bound in \eqref{convergence} characterizes the impact of clients' heterogeneous data, e.g., $p_iG_i$ on the convergence rate.  
  \end{itemize}

 \section{Adaptive Client Sampling Algorithm}
\label{sec:optimizationProblem}
In this section, {we first %
obtain the analytical lower and upper bound of the expected total learning time $\Expect[T_\textnormal{tot}]$ with sampling probability $\boldsymbol{q}$ and training round $R$.} %
Then, %
we formulate an %
 analytical approximate problem of the original Problem \textbf{P1} based on the convergence upper bound in %
Theorem~\ref{convergencebound}.  %
Finally, we develop an efficient algorithm to solve the new problem with insightful sampling principles. %

{
Here, we note that any optimization for minimizing the FL training time has the challenge that, before the model has been fully trained, it is generally impossible to know exactly how different FL configurations affect the FL performance. Therefore, to optimize the FL performance, we need to have a lightweight surrogate that can (approximately) predict what will happen if we choose a specific configuration. We use the convergence upper bound as the surrogate for this purpose, which is a common practice~\cite{chen2020convergence,yang2020energy,tran2019federated,wan2021convergence,wang2019adaptive}.
}

\subsection{Analysis for the expected round time  {$\Expect[T^{(r)}(\boldsymbol{q})]$}}
{Since the analytically characterizing the expected round time  $\Expect[T^{(r)}(\boldsymbol{q})]$
in \eqref{total_bandwidth_wireless} is infeasible,  we consider the following approximation to characterize how clients' heterogeneous computation time (i.e., $\tau_i$) and heterogeneous communication time (i.e., $t_i$)  affect the optimal client sampling.}
 
{Without loss of generality, we order $N$ clients such that  their computation time are in the ascending order as follows: %

\begin{equation}
\label{reorder_tau}\tau_{1} \leq \tau_{2} \leq \ldots \leq \tau_{i}  \leq \ldots \leq \tau_{N}.
\end{equation}
Then, we show the following theorem that characterizes the lower and upper bounds of   $\Expect[T^{(r)}(\boldsymbol{q})]$.
\begin{theorem}
\label{lemma:expect_tr}
The expected round time 
$\Expect[T^{(r)}(\boldsymbol{q})]$ is lower and upper bounded by %
\begin{equation}
\begin{aligned}
\label{E_T_total}
\frac{{K}\sum\nolimits_{i=1}^{N}q_it_i}{f_{tot}}+\Expect\left[\min_{i \in  \mathcal{K}^{(r)}(\boldsymbol{q})}\left\{\tau_{i}\right\}\right] \le \Expect[T^{(r)}(\boldsymbol{q})]  \\
  \le \frac{{K}\sum\nolimits_{i=1}^{N}q_it_i}{f_{tot}}+\Expect\left[\max_{i \in \mathcal{K}^{(r)}(\boldsymbol{q})}\left\{\tau_{i}\right\}\right],
   \end{aligned}
\end{equation}
where $\Expect\left[\min_{i \in \mathcal{K}^{(r)}(\boldsymbol{q})}\left\{\tau_{i}\right\}\right]$ and $\Expect\left[\max_{i \in \mathcal{K}^{(r)}(\boldsymbol{q})}\left\{\tau_{i}\right\}\right]$ are the expected fastest and slowest client's computation time in the sampled client set $\mathcal{K}^{(r)}(\boldsymbol{q})$, whose analytical expressions are as follows:  
\begin{equation}
\label{Eperround_T_lower}
\Expect\left[\min_{i \in \mathcal{K}^{(r)}(\boldsymbol{q})}\left\{\tau_{i}\right\}\right]=\sum\limits_{i=1}^{N}\!\left[\left(\sum\limits_{j=i}^{N}\! q_{j}\right)^{\!K}\!\!-\!\left(\sum\limits_{j=i+1}^{N} \!q_{j}\right)^{\!K}\right]  \tau_{i},
\end{equation}
 \begin{equation}
\label{Eperround_T_upper}
\Expect\left[\max_{i \in \mathcal{K}^{(r)}(\boldsymbol{q})}\left\{\tau_{i}\right\}\right]=\sum_{i=1}^{N}\!\left[\left(\sum\limits_{j=1}^{ i}\! q_{j}\right)^{\!K}\!\!-\!\left(\sum\limits_{j=1}^{i-1} \!q_{j}\right)^{\!K}\right]  \tau_{i}.
\end{equation}
\end{theorem}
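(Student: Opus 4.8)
The plan is to first bound the round time $T^{(r)}(\boldsymbol{q})$ conditioned on a realized sampled set $\mathcal{K}^{(r)}(\boldsymbol{q})$, and only afterwards take the expectation over the random sampling. Conditioned on the set, $T^{(r)}(\boldsymbol{q})$ is the unique root (larger than $\max_{i\in\mathcal{K}^{(r)}(\boldsymbol{q})}\tau_i$) of the implicit equation \eqref{total_bandwidth_wireless}, which is well-posed since the left-hand side is continuous and strictly decreasing from $+\infty$ to $0$ on $(\max_i\tau_i,\infty)$. First I would exploit the monotonicity of each summand $t_i/(T^{(r)}-\tau_i)$ in $\tau_i$. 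Writing $\tau_{\min}=\min_{i\in\mathcal{K}^{(r)}(\boldsymbol{q})}\tau_i$ and $\tau_{\max}=\max_{i\in\mathcal{K}^{(r)}(\boldsymbol{q})}\tau_i$, the bounds $\tau_{\min}\le\tau_i\le\tau_{\max}$ give termwise $t_i/(T^{(r)}-\tau_{\max}) \ge t_i/(T^{(r)}-\tau_i) \ge t_i/(T^{(r)}-\tau_{\min})$. Summing over the sampled set and substituting \eqref{total_bandwidth_wireless} yields
\[
\frac{\sum_{i\in\mathcal{K}^{(r)}(\boldsymbol{q})}t_i}{T^{(r)}-\tau_{\min}} \le f_{tot} \le \frac{\sum_{i\in\mathcal{K}^{(r)}(\boldsymbol{q})}t_i}{T^{(r)}-\tau_{\max}},
\]
which rearranges into the conditional sandwich $\frac{\sum_{i\in\mathcal{K}^{(r)}(\boldsymbol{q})}t_i}{f_{tot}}+\tau_{\min} \le T^{(r)}(\boldsymbol{q}) \le \frac{\sum_{i\in\mathcal{K}^{(r)}(\boldsymbol{q})}t_i}{f_{tot}}+\tau_{\max}$.

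Next I would take expectations over the sampling and use linearity. The shared term is $\Expect\big[\sum_{i\in\mathcal{K}^{(r)}(\boldsymbol{q})} t_i\big]$: since the set is built by $K$ independent draws with replacement, each selecting client $i$ with probability $q_i$ and contributing $t_i$, the expected contribution per draw is $\sum_{i=1}^N q_i t_i$, so over $K$ draws $\Expect\big[\sum_{i\in\mathcal{K}^{(r)}(\boldsymbol{q})} t_i\big] = K\sum_{i=1}^N q_i t_i$. Dividing by $f_{tot}$ reproduces the common term $\tfrac{K\sum_{i=1}^N q_i t_i}{f_{tot}}$ appearing in both sides of \eqref{E_T_total}, and the residual terms are precisely $\Expect[\tau_{\min}]=\Expect[\min_{i\in\mathcal{K}^{(r)}(\boldsymbol{q})}\tau_i]$ and $\Expect[\tau_{\max}]=\Expect[\max_{i\in\mathcal{K}^{(r)}(\boldsymbol{q})}\tau_i]$.

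It then remains to establish the closed forms \eqref{Eperround_T_lower}--\eqref{Eperround_T_upper} for these two order statistics. Using the ascending ordering $\tau_1\le\cdots\le\tau_N$ from \eqref{reorder_tau}, the slowest sampled computation equals $\tau_i$ exactly when the largest drawn index is $i$. By independence of the $K$ draws, $\Pr(\text{all drawn indices}\le i)=(\sum_{j=1}^i q_j)^K$, so $\Pr(\text{max index}=i)=(\sum_{j=1}^i q_j)^K-(\sum_{j=1}^{i-1} q_j)^K$; multiplying by $\tau_i$ and summing gives \eqref{Eperround_T_upper}. The minimum follows from the symmetric tail argument: the fastest sampled computation equals $\tau_i$ iff all drawn indices are $\ge i$ but not all $\ge i+1$, contributing probability $(\sum_{j=i}^N q_j)^K-(\sum_{j=i+1}^N q_j)^K$ and hence \eqref{Eperround_T_lower}.

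The main obstacle I anticipate is not the algebra but the careful bookkeeping of the with-replacement sampling. One must keep multiplicities consistent so that a repeated client contributes its $t_i$ as many times as it is drawn in $\sum_{i\in\mathcal{K}^{(r)}(\boldsymbol{q})}t_i$ (which is what makes $\Expect[\cdot]=K\sum_i q_i t_i$ clean), whereas the order-statistic events depend on the drawn indices only through their extreme values, so multiplicities are irrelevant there; conflating these two viewpoints is the easiest place to slip. As a sanity check, both bounds coincide, and are therefore tight, when all sampled clients share a common computation time $\tau_{\min}=\tau_{\max}$, which also clarifies that the looseness of \eqref{E_T_total} is governed entirely by the spread of the sampled $\{\tau_i\}$.
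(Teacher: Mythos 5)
Your proposal is correct and follows essentially the same route as the paper: sandwich the implicit equation \eqref{total_bandwidth_wireless} termwise using $\tau_{\min}\le\tau_i\le\tau_{\max}$ to get the conditional bounds $\frac{\sum_{i}t_i}{f_{tot}}+\tau_{\min}\le T^{(r)}(\boldsymbol{q})\le \frac{\sum_{i}t_i}{f_{tot}}+\tau_{\max}$, take expectations over the $K$ independent with-replacement draws to obtain $K\sum_i q_i t_i/f_{tot}$, and evaluate the extreme order statistics via $\Pr(\text{max index}=i)=(\sum_{j\le i}q_j)^K-(\sum_{j\le i-1}q_j)^K$ and its symmetric counterpart for the minimum. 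Your added remarks on the well-posedness of the root of \eqref{total_bandwidth_wireless} and on keeping multiplicities consistent in the with-replacement sum are refinements the paper leaves implicit, but the argument is the same.
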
%
\begin{proof}
For any sampled $\mathcal{K(\boldsymbol{q})}^{(r)}$ clients in round $r$, we have the lower and upper bounds of \eqref{total_bandwidth_wireless} as 
\begin{equation}
\begin{aligned}
\label{lower_upper_bound_wireless}
\frac{\sum_{i=1}^{K}t_i}{T^{(r)}(\boldsymbol{q})-\min_{i \in \mathcal{K}^{(r)}(\boldsymbol{q})}\left\{\tau_{i}\right\}}<\sum_{i=1}^{K}\frac{t_i}{T^{(r)}(\boldsymbol{q})-\tau_i}\\
<\frac{\sum_{i=1}^{K}t_i}{T^{(r)}(\boldsymbol{q})-\max _{i \in \mathcal{K}^{(r)}(\boldsymbol{q})}\left\{\tau_{i}\right\}}.
  \end{aligned}
\end{equation}
Then, substituting \eqref{total_bandwidth_wireless} into the above inequality, we have 
\begin{equation}
\label{Tr_bound_wireless}
\frac{\sum_{i=1}^{K}t_i}{f_{tot}}+\min_{i \in \mathcal{K}(\boldsymbol{q})^{(r)}}\left\{\tau_{i}\right\}<{T^{(r)}}<\frac{\sum_{i=1}^{K}t_i}{f_{tot}}+\max _{i \in \mathcal{K}(\boldsymbol{q})^{(r)}}\left\{\tau_{i}\right\}.
\end{equation}

Since computation time $\tau_i$ and communication time $t_i$ is independent, taking the expectation of $\frac{\sum_{i=1}^{K}t_i}{f_{tot}}$ in \eqref{Tr_bound_wireless} over sampled clients $\mathcal{K}^{(r)}(\boldsymbol{q})$, we have  
\begin{equation}
\label{Eperround_ti}
\Expect_{i \in \mathcal{K}^{(r)}(\boldsymbol{q})}\left[\frac{\sum_{i=1}^{K}t_i}{f_{tot}}\right]={K}\frac{\sum_{i=1}^{N}q_it_i}{f_{tot}}.
\end{equation}

For the upper bound of $\max _{i \in \mathcal{K}(\boldsymbol{q})^{(r)}}\left\{\tau_{i}\right\}$, we show that the probability of client $i \in \mathcal{K}(\boldsymbol{q})^{(r)}$ being the \emph{slowest} one  amongst the  sampled $K$ clients  is $\left(\sum\nolimits_{j=1}^{ i} q_{j}\right)^{\!K}\!\!-\!\left(\sum\nolimits_{j=1}^{i-1} \!q_{j}\right)^{\!K}$. %
Since we sample devices according to $\boldsymbol{q}$,
taking the expectation of all $N$ clients over time $\tau_i$ gives \eqref{Eperround_T_upper}.

Similarly, for the lower bound of $\min_{i \in \mathcal{K}^{(r)}(\boldsymbol{q})}\left\{\tau_{i}\right\}$, we show that the probability of client $i \in \mathcal{K}(\boldsymbol{q})^{(r)}$ being the \emph{fastest} one  amongst the  sampled $K$ clients  is $\left(\sum\nolimits_{j=i}^{N} q_{j}\right)^{K}-\left(\sum\nolimits_{j=i+1}^{N} q_{j}\right)^{K}$. Taking the expectation over of all $N$ clients with sampling probability $\boldsymbol{q}$ over time $\tau_i$ gives \eqref{Eperround_T_lower}.
\end{proof}}

{Although we can derive the analytical expression of the lower and upper bound of  $\Expect[T^{(r)}(\boldsymbol{q})]$, it is still difficult to use them to approximately solve the optimal sampling probability $\boldsymbol{q}$ in Problem \textbf{P1}. This is because the control variables $\boldsymbol{q}$  
are in a complex polynomial sum with an order $K$ in \eqref{Eperround_T_lower} and \eqref{Eperround_T_upper}, which prevents the access for the partial derivatives.}

{For analytical tractability, %
for any sampled clients $\mathcal{K}^{(r)}(\boldsymbol{q})$ in round  $r$, %
we define an average sampled clients' computation time %
$\bar{\tau}^{(r)}$ as 
\begin{equation}
\label{mean_appro_tau_bar_wireless}
\bar{\tau}^{(r)}:=\frac{\sum_{i=1}^{K}\tau_i}{K}, i \in \mathcal{K}^{(r)}(\boldsymbol{q}),
\end{equation}
where $\min_{i \in \mathcal{K}^{(r)}(\boldsymbol{q})}\left\{\tau_{i}\right\}<\bar{\tau}^{(r)}<\max_{i \in \mathcal{K}^{(r)}(\boldsymbol{q})}\left\{\tau_{i}\right\}$. 
Then, taking expectation over the sampled clients $\mathcal{K}(\boldsymbol{q})^{(r)}$ in \eqref{mean_appro_tau_bar_wireless}, %
we have 
\begin{equation}
\label{mean_appro_tau_bar}
\Expect_{\mathcal{K}(\boldsymbol{q})^{(r)}}\left[\bar{\tau}^{(r)}\right]:=\Expect_{i \in \mathcal{K}(\boldsymbol{q})^{(r)}}\left[\frac{\sum_{i=1}^{K}\tau_i}{K}\right]={{\sum_{i=1}^{N}q_i\tau_i}}.
\end{equation}
We can show that $\Expect\left[\min_{i \in \mathcal{K}^{(r)}(\boldsymbol{q})}\left\{\tau_{i}\right\}\right]<\Expect_{\mathcal{K}(\boldsymbol{q})^{(r)}}\left[\bar{\tau}^{(r)}\right]<\Expect\left[\max_{i \in \mathcal{K}^{(r)}(\boldsymbol{q})}\left\{\tau_{i}\right\}\right]$.
Therefore, based on \eqref{mean_appro_tau_bar},  we define the approximated expected round time of ${\Expect}\left[T^{(r)}(\boldsymbol{q})\right]$ in \eqref{E_T_total} as 
\begin{equation}
\label{mean_appro_Tr_bound_wireless}
\begin{aligned}
 \tilde{\Expect}\left[T^{(r)}(\boldsymbol{q})\right]&:=\frac{K\sum_{i=1}^{N}q_it_i}{f_{tot}}+\Expect_{\mathcal{K}(\boldsymbol{q})^{(r)}}\left[\bar{\tau}^{(r)}\right]\\
&=\frac{K\sum_{i=1}^{N}q_it_i}{f_{tot}}+{{\sum_{i=1}^{N}q_i\tau_i}}\\
&=\sum_{i=1}^{N}q_i\left(\frac{Kt_i}{f_{tot}}+\tau_i\right).
\end{aligned}
\end{equation}
The approximation $\tilde{\Expect}\left[T^{(r)}(\boldsymbol{q})\right]$ in \eqref{mean_appro_Tr_bound_wireless} equals to  ${\Expect}\left[T^{(r)}(\boldsymbol{q})\right]$ in \eqref{E_T_total} for the following two cases.

{\textbf{Case 1: homogeneous computation time $\tau_{i}$ }(i.e., $\tau_i=\tau_0, \forall i \in \mathcal{N}$)}

{For Case 1,  we rewrite  \eqref{total_bandwidth_wireless} as 
\begin{equation}
\label{case1_tau0}
\frac{\sum_{i=1}^{K}t_i}{T(\boldsymbol{q})^{(r)}-\tau_0}=f_{tot}, \ \forall i \in \mathcal{K(\boldsymbol{q})}^{(r)}.
\end{equation}
Then, taking the expectation of $T(\boldsymbol{q})^{(r)}$ in \eqref{case1_tau0}, we have
\begin{equation}
\label{case1_Tr_tau_0}
\begin{aligned}
 \Expect_{\mathcal{K}^{(r)}(\boldsymbol{q})}\left[T^{(r)}(\boldsymbol{q})\right]&=\Expect_{i \in \mathcal{K}^{(r)}(\boldsymbol{q})}\left[\frac{\sum_{i=1}^{K}t_i}{f_{tot}}\right]+\Expect_{i \in \mathcal{K}^{(r)}(\boldsymbol{q})}\left[\tau_0\right]\\
&={K}\frac{\sum_{i=1}^{N}q_it_i}{f_{tot}}+{{\sum_{i=1}^{N}q_i\tau_0}}\\
&=\sum_{i=1}^{N}q_i\left(\frac{Kt_i}{f_{tot}}+\tau_i\right)\\
&=\tilde{\Expect}_{\mathcal{K}(\boldsymbol{q})^{(r)}}\left[T^{(r)}(\boldsymbol{q})\right].
\end{aligned}
\end{equation}}

{Notably, for Case~1, the lower and upper bound of $\Expect_{\mathcal{K}(\boldsymbol{q})^{(r)}}\left[T^{(r)}(\boldsymbol{q})\right]$ in Theorem~\ref{lemma:expect_tr} are the same, since $\min_{i \in \mathcal{K}^{(r)}(\boldsymbol{q})}\left\{\tau_{i}\right\}=\max_{i \in \mathcal{K}^{(r)}(\boldsymbol{q})}\left\{\tau_{i}\right\}=\tau_0$.} 

\vspace{1mm}

\textbf{Case 2: heterogeneous $\tau_{i}$ when
 $K=1$.} 

\vspace{1mm}
 
{For Case~2, we  rewrite  \eqref{total_bandwidth_wireless} as  
 \begin{equation}
\label{case2_k1}
\frac{t_i}{T(\boldsymbol{q})^{(r)}-\tau_i}=f_{tot}, \ \forall i \in \mathcal{K(\boldsymbol{q})}^{(r)}.
\end{equation}
Then, taking the expectation of $T^{(r)}(\boldsymbol{q})$ in \eqref{case2_k1}, we have
\begin{equation}
\label{case2_Tr_k1}
\begin{aligned}
 \Expect_{\mathcal{K}^{(r)}(\boldsymbol{q})}\left[T^{(r)}(\boldsymbol{q})\right]&=\Expect_{i \in \mathcal{K}^{(r)}(\boldsymbol{q})}\left[\frac{t_i}{f_{tot}}\right]+\Expect_{i \in \mathcal{K}^{(r)}(\boldsymbol{q})}\left[\tau_i\right]\\
&=\frac{\sum_{i=1}^{N}q_it_i}{f_{tot}}+{{\sum_{i=1}^{N}q_i\tau_i}}\\
&=\sum_{i=1}^{N}q_i\left(\frac{t_i}{f_{tot}}+\tau_i\right)\\
&=\tilde{\Expect}_{\mathcal{K}^{(r)}(\boldsymbol{q})}\left[T^{(r)}(\boldsymbol{q})\right].
\end{aligned}
\end{equation}}

{Similarly, for Case~2, the lower and upper bound of $\Expect_{\mathcal{K}(\boldsymbol{q})^{(r)}}\left[T^{(r)}(\boldsymbol{q})\right]$ in Theorem~\ref{lemma:expect_tr} are the same, as we have $\Expect\left[\min_{i \in \mathcal{K}^{(r)}(\boldsymbol{q})}\left\{\tau_{i}\right\}\right]=\Expect\left[\max_{i \in \mathcal{K}^{(r)}(\boldsymbol{q})}\left\{\tau_{i}\right\}\right]=\sum_{i=1}^N \tau_i$ when $K=1$.}

For the general cases, we can still use   $\tilde{\Expect}[T^{(r)}(\boldsymbol{q})]$ from \eqref{mean_appro_Tr_bound_wireless}  to 
approximate $\Expect[T^{(r)}(\boldsymbol{q})]$. 
We summarize the insights of the obtained analytical round time  in \eqref{mean_appro_Tr_bound_wireless} as follows: %

\begin{itemize}
    \item 
 When we have unlimited or large enough bandwidth $f_{tot}$ (i.e., ${Kt_i}/{f_{tot}}\rightarrow0$), the approximated expected per-round time $\tilde{\Expect}[T^{(r)}(\boldsymbol{q})]$ in \eqref{mean_appro_Tr_bound_wireless} 
  is determined by the computation heterogeneity $\tau_i$, %
  which is independent to sampling number $K$ as clients' computation performs in parallel.
  \item In wireless networks, the communication is usually the bottleneck due to the limited bandwidth, e.g., the scale of communication time is larger than the scale of computation time. In particular, when $\tau_i\ll{Kt_i}/{f_{tot}}, \forall i \in \mathcal{N}$, the approximated expected round time $\tilde{\Expect}\left[T^{(r)}(\boldsymbol{q})\right]$  is roughly linearly increasing with sampling number $K$, as a larger $K$ reduces the communication bandwidth of each sampled client to upload the model parameters.  
\end{itemize} 
\textbf{Remark}: Although a smaller $K$ leads to shorter expected round time, it does not mean a smaller  $K$ can reduce the total convergence time. This is because the total FL convergence time also depends on the number of training rounds for reaching the target  precision. According to our obtained convergence result in Theorem~1, a smaller $K$ will leads to a slower convergence rate (e.g., a larger number of rounds  $R$ for reaching the same error $\epsilon$). Our experimental results in Section~\ref{sec:experimentation} also identify the impact of $K$ on the performance of convergence time.  %

\subsection{Approximate Optimization Problem for Problem \textbf{P1}}
Based on the approximated expected per-round learning time $\tilde{\Expect}[T^{(r)}(\boldsymbol{q})]$ in \eqref{mean_appro_Tr_bound_wireless} %
and by letting the analytical convergence upper bound in \eqref{convergence} satisfy the convergence constraint,\footnote{Optimization using upper bound as an approximation has also been adopted in \cite{wang2019adaptive,tran2019federated, chen2020convergence,luo2020cost}. %
} the original Problem \textbf{P1} can be approximated as
\begin{equation}
\begin{array}{cl}
\label{obj2}
\textbf{P2:} \  \min_{\boldsymbol{q}, R} &  \sum_{i=1}^{N}q_i\left(\dfrac{Kt_i}{f_{tot}}+\tau_i\right)R \\
\quad \text { s.t. } & \dfrac{1}{R}\left(\alpha {\sum_{i=1}^N\dfrac{p_i^2G_i^{2}}{Kq_i}} +\beta\right)\le \epsilon,\\
&\sum_{i=1}^Nq_i=1,\\
&  q_i>0,  \forall i \in \mathcal{N}, \ R \in \mathbb{Z}^{+}. 
\end{array}
\end{equation}%

Combining with \eqref{convergence}, we can see that Problem \textbf{P2} is more constrained than Problem \textbf{P1}, i.e., any feasible solution  of Problem \textbf{P2} is a subset of that of Problem~\textbf{P1}. 

We further relax $R$ as a continuous variable for theoretical analysis in Problem \textbf{P2}. For this relaxed  problem,  %
suppose ($\boldsymbol{q}^*$,  $R^*$) is the optimal solution, then we must have 
\begin{equation} \label{equality_hold}
\frac{1}{R^*}\left(\alpha {\sum_{i=1}^N\frac{p_i^2G_i^{2}}{Kq_i^*}} +\beta\right)= \epsilon.
\end{equation}
This is because if  \eqref{equality_hold} holds with an inequality, we could always find an $R^{\prime}< R^*$ that  
satisfies \eqref{equality_hold} with equality, but the solution ($q^*$, $R^{\prime}$) can further reduce the objective function value. Therefore, for the optimal $R$, \eqref{equality_hold} always holds, and we can obtain $R$ from \eqref{equality_hold}
and substitute it %
into the objective of Problem \textbf{P2}. Then, the objective of Problem \textbf{P2} is
\begin{equation}
\label{obj22}
\left[\sum_{i=1}^{N}q_i\left(\frac{Kt_i}{f_{tot}}+\tau_i\right)\right]\left(\alpha {\sum\limits_{i=1}^N\frac{p_i^2G_i^{2}}{Kq_i}} +\beta\right),
\end{equation}
which\footnote{For ease of analysis, we omit  $\epsilon$ as it is a constant multiplied by the entire objective function.} %
is only associated with client sampling probability $\boldsymbol{q}$. In line with this,   
\textbf{P2} can be expressed as  
\begin{equation}
\label{ob3}
\begin{array}{cl}
\textbf{P3:} \  
\min_{\boldsymbol{q}}  & \left[\sum_{i=1}^{N}q_i\left(\dfrac{Kt_i}{f_{tot}}+\tau_i\right)\right]\left(\alpha {\sum_{i=1}^N\dfrac{p_i^2G_i^{2}}{Kq_i}} +\beta\right) \\
\ \ \text {s.t.} & \!\!\!\! \sum_{i=1}^Nq_i=1,\  q_i>0,  \forall i \in \{1,\ldots N\}. 
\end{array}
\end{equation}

\textbf{Remark:} %
The objective function of Problem \textbf{P3} is in a more straightforward form compared to Problem \textbf{P2}. However, to solve for the optimal sampling probability $\boldsymbol{q}$, we need to know the value of the parameters in \eqref{ob3}, e.g., $G_i$, $\alpha$, and $\beta$, which is challenging because we can only estimate them during the training process of FL.\footnote{We assume that clients' heterogeneous computation time $\tau_i$,  communication time $t_i$, and their dataset size proportion $p_i$ can be measured offline.} 

In the following, we %
solve Problem \textbf{P3} as an approximation of the original Problem \textbf{P1}. 
Our empirical results in Section~\ref{sec:experimentation} demonstrate that the solution obtained from solving Problem \textbf{P3} %
achieves the target precision with less  time compared to baseline client sampling schemes. %

\subsection{Solving Problem  \textbf{P3}}
In this subsection, we first show how to obtain the unknown parameters, e.g., $G_i$, $\alpha$ and $\beta$. %
Then, we develop an efficient algorithm to  solve Problem \textbf{P3}. We summarize the overall algorithm in  Algorithm~\ref{opt_algorithm}. 
Finally, we identify some insightful solution properties. 

\subsubsection{\textbf{Estimation of Parameters  $G_i$ and $\frac{\beta}{\alpha }$}}
We first show how to estimate $\frac{\beta}{\alpha}$ via a substitute sampling scheme.\footnote{We only need to estimate the value of $\frac{\beta}{\alpha }$ instead of $\alpha$ and $\beta$ each, because we can divide parameter $\alpha$ on the objective of Problem \textbf{P3} without affecting the optimal sapling solution.}  Then, we show that we could indirectly acquire the knowledge of $G_i$ during the estimation process of $\frac{\beta}{\alpha}$.

The basic idea is to utilize the derived convergence upper bound in \eqref{convergence} to  approximately solve $\frac{\beta}{\alpha}$ as a variable,  
via performing Algorithm~1 with two
baseline sampling schemes: uniform sampling $\mathbf{q_1}$ with ${q}_i=\frac{1}{N}$ and  weighted sampling $\mathbf{q_2}$ with ${q}_i=p_i$, respectively. 

{Note that for sampling schemes $\mathbf{q_1}$ and $\mathbf{q_2}$, we only run a limited number of rounds to reach a predefined loss $F_s$. We don't run these schemes until they converge to the precision $\epsilon$.}
This is because our goal is to find and run with the optimal sampling scheme $\mathbf{q^*}$ so that we can achieve the target precision with the minimum wall-clock time. %

Specifically, suppose $R_{\mathbf{q_1},s}$ and $R_{\mathbf{q_2},s}$ are the number of rounds for reaching the pre-defined loss $F_s$ for schemes $\mathbf{q_1}$ and $\mathbf{q_2}$, respectively.  According to \eqref{convergence}, we have 
\begin{equation}
\begin{cases}
    \label{A0B01}
   \left(F_s-F^{*}\right)R_{\mathbf{q_1},s} \approx \alpha N{\sum_{i=1}^Np_i^2G_i^{2}}/K +\beta,\\
      \left(F_s-F^{*}\right)R_{\mathbf{q_2},s} \approx \alpha {\sum_{i=1}^N{p_iG_i^{2}}}/K +\beta.
 \end{cases}
\end{equation}
Based on \eqref{A0B01}, we have
  \begin{equation}
    \label{A0B02}
    \frac{R_{\mathbf{q_1},s}}{R_{\mathbf{q_2},s}} \approx 
    \frac{ \alpha N{\sum_{i=1}^Np_i^2G_i^{2}}/K +\beta}{\alpha {\sum_{i=1}^N{p_iG_i^{2}}}/K +\beta}.
\end{equation}
We can calculate $\frac{\alpha }{\beta}$ from \eqref{A0B02} once we know the value of  $G_i$. 

Notably, we can estimate $G_i$ during the procedure of estimating $\frac{\alpha }{\beta}$. The idea is to let the sampled clients send back the norm of their local SGD along with their returned local model updates, and then the server updates $G_i$ with the received norm values. This approach does not add much communication overhead, since we only need to additionally transmit the value of the gradient norm (e.g., only a few bits for quantization) instead of the full gradient information.

In practice, %
due to the sampling variance, we may set several different $F_s$  %
to obtain an averaged estimation of $\frac{\alpha }{\beta}$. %
The overall estimation process corresponds to Lines~\ref{alg:optimalSolution:startEstimation}--\ref{alg:optimalSolution:endEstimation} of Algorithm~\ref{alg:optimalSolution}.

\begin{algorithm}[t]
\label{opt_algorithm}
\small
\caption{Approximate Optimal Client Sampling  for Federated Learning with System and Statistical Heterogeneity}
\label{alg:optimalSolution}
	\KwIn{{$N$, $K$, $E$, $\tau_i$, $t_i$, $f_{\text{tot}}$, $p_i$,  $\mathbf{w}_0$}, loss  $F_s$,  step-size  $\epsilon_0$}
	\KwOut{{Approximation of $\boldsymbol{q}^*$}}

\For{$s=1,2, \ldots, S$  \label{alg:optimalSolution:startEstimation}}{
Server runs Algorithm 1 with uniform sampling $\mathbf{q_1}$ and weighted sampling $\mathbf{q_2}$, respectively;

The sampled clients send back their local gradient norm information along with their updated models;  

Server updates all clients' $G_i$ based on the received gradient norms;

Server record $R_{\mathbf{q_1},s}$ and $R_{\mathbf{q_2},s}$ when reaching $F_s$;
}

Calculate average $\frac{\beta}{\alpha }$ using \eqref{A0B02}; \label{alg:optimalSolution:endEstimation}

\For{$M(\epsilon_0)=M_{min}, M_{min}+\epsilon_0, M_{min}+2\epsilon_0 \ldots, M_{max}$  \label{alg:optimalSolution:startSolve}}{ %
Substitute $M(\epsilon_0)$, $\frac{\beta}{\alpha }$, $N$,  $t_i$, $p_i$, $G_i$ into \textbf{P4}; 

Solve \textbf{P4} via CVX, and obtain $\boldsymbol{q}^*(M(\epsilon_0))$ %
}

\Return $\boldsymbol{q}^*(M^*(\epsilon_0))=\arg\min_{M(\epsilon_0)}\boldsymbol{q}^*(M(\epsilon_0))$ %
\label{alg:optimalSolution:endOptimization}

\end{algorithm}

\subsubsection{\textbf{Optimization Algorithm for  $\boldsymbol{q}^*$}} %
{We first identify the property of Problem \textbf{P3} %
and then show how to compute $\boldsymbol{q}^*$. 
\begin{lemma}
\label{theorem:biconvex}
Problem \textbf{P3} %
is non-convex. 
\end{lemma}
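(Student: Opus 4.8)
The plan is to show that the objective of Problem~\textbf{P3}, which I write as $\Psi(\boldsymbol q):=A(\boldsymbol q)\,B(\boldsymbol q)$ with affine factor $A(\boldsymbol q)=\sum_{i=1}^N a_i q_i$, $a_i:=\frac{Kt_i}{f_{tot}}+\tau_i>0$, and factor $B(\boldsymbol q)=\alpha\sum_{i=1}^N\frac{p_i^2G_i^2}{Kq_i}+\beta$, fails to be a convex function of $\boldsymbol q$ on its domain $\{q_i>0\}$. The structural observation is that $A$ is affine with strictly positive coefficients while $B$ is convex, being a nonnegative combination of the convex maps $q_i\mapsto 1/q_i$ plus the constant $\beta>0$. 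Since a product of an affine and a convex function is not convex in general, the entire content of the lemma lies in producing an explicit certificate of non-convexity, and I would do this through the Hessian.

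First I would compute $\nabla^2\Psi=\mathbf a(\nabla B)^{\top}+(\nabla B)\mathbf a^{\top}+A\,\nabla^2 B$, using $\nabla^2 A=0$, where $\mathbf a=(a_1,\dots,a_N)$ and $\nabla B$ has entries $-\alpha p_i^2G_i^2/(Kq_i^2)$. The term $A\,\nabla^2 B$ is positive semidefinite, but the symmetric rank-two term $\mathbf a(\nabla B)^{\top}+(\nabla B)\mathbf a^{\top}$ is indefinite whenever $\mathbf a$ and $\nabla B$ are not collinear; the task is therefore to exhibit a point $\boldsymbol q^{\circ}$ and a direction $v$ at which the negative part wins, i.e.\ $v^{\top}\nabla^2\Psi(\boldsymbol q^{\circ})\,v<0$. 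Equivalently, I could display two feasible points whose midpoint violates Jensen's inequality for $\Psi$.

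The cleanest witness is a small explicit instance. Taking $N=2$, $\beta=0$, $\mathbf a=(1,2)$ and $B(\boldsymbol q)=1/q_1+1/q_2$, at $\boldsymbol q^{\circ}=(1,1)$ one has $\nabla B=(-1,-1)$, $A=3$, $\nabla^2 B=\mathrm{diag}(2,2)$, so for $v=(2,3)$ the quadratic form is $v^{\top}\nabla^2\Psi\,v=2(\mathbf a^{\top}v)(\nabla B^{\top}v)+A\,v^{\top}\nabla^2 B\,v=2(8)(-5)+3(26)=-80+78=-2<0$, certifying negative curvature. This two-variable witness embeds into any $N\ge2$ by fixing the remaining coordinates at small positive values: the restriction retains the affine-times-convex form and its $2\times2$ Hessian block changes only through an arbitrarily small increase of $A$, so strict negativity persists. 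I expect the main obstacle to be precisely that affine-times-convex is only \emph{generically} non-convex---when $\mathbf a$ is proportional to $\nabla B$ the curvature is nonnegative---so one cannot argue by structure alone and must commit to, and verify, a concrete witness inside the open domain. A related point worth stating carefully is the domain on which non-convexity is asserted: the negative-curvature direction above does not satisfy $\sum_i v_i=0$, so the claim is most cleanly phrased on the open positive orthant (the natural domain of $\Psi$), which already suffices to conclude that \textbf{P3} is a non-convex program and motivates the subsequent reformulation via the auxiliary variable $M$ solved by CVX.
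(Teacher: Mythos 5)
Your route is essentially the paper's: both arguments establish non\-/convexity by showing the Hessian of the objective of \textbf{P3} is indefinite in the two\-/client case and then invoking the general $N$ by restriction. Your decomposition $\nabla^{2}\Psi=\mathbf a(\nabla B)^{\top}+(\nabla B)\mathbf a^{\top}+A\,\nabla^{2}B$ is correct, the arithmetic of the witness ($v^{\top}\nabla^{2}\Psi\,v=-80+78=-2$ at $\boldsymbol q^{\circ}=(1,1)$ with $v=(2,3)$) checks out, the choice $\beta=0$ is harmless since $\beta$ does not enter the Hessian at all, and the embedding into $N\ge 2$ by freezing the remaining coordinates at small positive values is sound (only $A$ grows, by an arbitrarily small amount, in the relevant $2\times2$ block).

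The gap is that your certificate is tied to hand\-/picked coefficients $\mathbf a=(1,2)$ and $\alpha p_i^2G_i^2/K=1$, whereas the lemma asserts non\-/convexity of \textbf{P3} with whatever positive coefficients $a_i=\frac{Kt_i}{f_{tot}}+\tau_i$ and $c_i=\alpha p_i^2G_i^2/K$ the system actually produces. Your structural remark that the rank\-/two term is indefinite whenever $\mathbf a$ and $\nabla B$ are not collinear does not by itself close this, because the positive semidefinite term $A\,\nabla^{2}B$ could in principle absorb the negative curvature; one must show it never does. The paper handles this by computing the $2\times2$ Hessian determinant symbolically and observing it equals $-\bigl(a_1c_2/q_2^{2}-a_2c_1/q_1^{2}\bigr)^{2}\le 0$, which is strictly negative off a one\-/dimensional curve for \emph{every} choice of positive $a_i,c_i$ and independently of $\beta$; your proof needs either this identity or a point/direction constructed from general $a_i,c_i$ to match the stated claim. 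Separately, you are right to flag that your negative\-/curvature direction violates $\sum_i v_i=0$: both your argument and the paper's prove non\-/convexity of the objective on the open orthant, not of its restriction to the simplex. In fact for $N=2$ the restriction to $q_1+q_2=1$ has second derivative $2a_2c_1/q_1^{3}+2a_1c_2/(1-q_1)^{3}>0$ and is convex, so the two\-/client constrained program is not itself a counterexample; the lemma should be read (as is standard) as the statement that the objective function of \textbf{P3} is not convex, which is what motivates the reformulation through $M$.
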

\begin{proof}
We prove this lemma by showing that the Hessian of the objective function in Problem \textbf{P3} is not positive semi-definite. 
{E.g., for $N=2$ case, we have
$\frac{\partial^2 \tilde{\Expect}[T_\textnormal{tot}]}{\partial^2 q_1}=\frac{2\alpha q_2\left(\frac{Kt_2}{f_{tot}}+\tau_2\right)p_1^2G_1^2}{q_1^3}>0$, %
whereas
$\frac{\partial^2 \tilde{\Expect}[T_\textnormal{tot}]}{\partial^2 q_1}\frac{\partial^2 \tilde{\Expect}[T_\textnormal{tot}]}{\partial^2 q_2}\!-\!\left(\frac{\partial^2 \tilde{\Expect}[T_\textnormal{tot}]}{\partial q_1\partial q_2}\right)^{2}\!=\!-\alpha ^2\left(\frac{\left(\frac{Kt_1}{f_{tot}}+\tau_1\right)p_2^2G_2^2}{q_2^2}-\frac{\left(\frac{Kt_2}{f_{tot}}+\tau_2\right)p_1^2G_1^2}{q_1^2}\right)^{2}\!\le \!0$}, which concludes the proof that Problem \textbf{P3} is non-convex.
\end{proof}}

{To efficiently solve Problem \textbf{P3}, we  define a new control variable %
\begin{equation}
    \label{new_M}
    M:=\sum\nolimits_{i=1}^Nq_i\left(\frac{Kt_i}{f_{tot}}+\tau_i\right),
\end{equation}
where $M_{min}=\min_{i\in \mathcal{N}} \left(\frac{Kt_i}{f_{tot}}+\tau_i\right) \le M \le M_{max}=\max_{i\in \mathcal{N}} \left(\frac{Kt_i}{f_{tot}}+\tau_i\right)$.
Then, we rewrite Problem \textbf{P3} as 
\begin{equation}
\label{ob4}
\begin{array}{cl}
 \!\!\!\! \!\!\textbf{P4:} \ \
 \min_{\boldsymbol{q},M} &g(\boldsymbol{q},M) = M \cdot \left(\alpha {\sum_{i=1}^N\frac{p_i^2G_i^2}{Kq_i}} +\beta\right) \\
\ \ \text {s.t.} &  \sum_{i=1}^Nq_i=1, \\ &\sum_{i=1}^Nq_it_i=M,\\
&  q_i>0,  \forall i \in \mathcal{N}.
\end{array}
\end{equation}
For any fixed feasible value of $M \in [M_{min}, M_{max}]$, Problem \textbf{P4} is convex in $\boldsymbol{q}$, because the objective function is strictly convex and the constraints are linear.} 

{We will solve Problem \textbf{P4} in two steps. First, for any fixed $M$, we will solve the optimal  $\boldsymbol{q}^*(M)$ in Problem \textbf{P4} %
via a convex optimization tool, e.g., CVX.  This allows us to write the objective function of Problem P4 as $g(\boldsymbol{q}^\ast(M), M)$. Then we will  solve the problem by using 
a linear search method over $M$ with a fixed step-size $\epsilon_0$ over the interval $[M_{min}, M_{max}]$.\footnote{We denote $M^*(\epsilon_0)$ and the corresponding $\boldsymbol{q}^*(M^*(\epsilon_0))$ in the searching domain to approximate the optimal $M^*$ and  $\boldsymbol{q}^*$ in Problem \textbf{P4}.}  %
This optimization process
corresponds to Lines \ref{alg:optimalSolution:startSolve}--\ref{alg:optimalSolution:endOptimization} of Algorithm~2.} %

\textbf{Remark}: Our optimization algorithm is efficient in the sense that the linear search method domain $[M_{min}, M_{max}]$ is independent of the scale of the problem, e.g., number of $N$. %

\subsubsection{\textbf{Property of $\boldsymbol{q}^*$}}
{Next we %
show some interesting properties of the optimal sampling strategy. %
\begin{theorem}
\label{property1}
Consider the optimal solution of %
Problem \textbf{P3} $\boldsymbol{q}^*$. For any two different clients $i$ and $j$ with $\frac{Kt_i}{f_{tot}}+\tau_i\le \frac{Kt_j}{f_{tot}}+\tau_j$ and $p_iG_i\ge p_jG_j$, we have $q_i^*\ge q_j^*$. 
\end{theorem}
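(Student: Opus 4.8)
The plan is to characterize $\boldsymbol{q}^*$ through the first-order (KKT) stationarity conditions of Problem \textbf{P3} and then read off the claimed ordering directly. To lighten notation, I would write $a_i := \frac{Kt_i}{f_{tot}}+\tau_i$ for the per-round time coefficient and $b_i := \frac{p_i^2 G_i^2}{K}$ for the importance coefficient, so that the objective becomes $g(\boldsymbol{q}) = \big(\sum_i a_i q_i\big)\big(\alpha\sum_i b_i/q_i + \beta\big)$. First I would translate the two hypotheses into this notation: $\frac{Kt_i}{f_{tot}}+\tau_i \le \frac{Kt_j}{f_{tot}}+\tau_j$ is exactly $a_i \le a_j$, and since $p_iG_i \ge p_jG_j \ge 0$, squaring preserves the order and gives $b_i \ge b_j$. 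Although Lemma~\ref{theorem:biconvex} shows that \textbf{P3} is non-convex, the single linear equality $\sum_i q_i = 1$ has nonzero gradient and the constraints $q_i>0$ are strict (hence inactive) at any feasible point, so a constraint qualification holds and the KKT stationarity conditions are necessary at the optimum $\boldsymbol{q}^*$.

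Next I would form the Lagrangian $L = \big(\sum_i a_i q_i\big)\big(\alpha\sum_i b_i/q_i + \beta\big) + \lambda\big(\sum_i q_i - 1\big)$, dropping the inactive nonnegativity multipliers. Writing $M^* := \sum_i a_i q_i^*$ and $C_1 := \alpha\sum_i b_i/q_i^* + \beta$ for the two positive factor values at the optimum, the stationarity condition $\partial L/\partial q_i = 0$ reads $a_i C_1 - \frac{\alpha M^* b_i}{(q_i^*)^2} + \lambda = 0$, which I would solve to obtain the closed form $(q_i^*)^2 = \frac{\alpha M^* b_i}{a_i C_1 + \lambda}$. The crucial structural observation is that the coefficient multiplying $a_i$ in the denominator is $C_1 = \alpha\sum_i b_i/q_i^* + \beta > 0$, \emph{manifestly positive}, and that the numerator $\alpha M^* b_i$ is a common positive factor scaled only by $b_i$.

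Finally I would compare two clients via the ratio $\frac{(q_i^*)^2}{(q_j^*)^2} = \frac{b_i}{b_j}\cdot\frac{a_j C_1 + \lambda}{a_i C_1 + \lambda}$. Feasibility $q_i^*>0$ forces every denominator $a_i C_1 + \lambda$ to be strictly positive, so both factors are well defined and positive. Since $b_i \ge b_j$ makes the first factor $\ge 1$, while $a_i \le a_j$ together with $C_1>0$ gives $a_i C_1 + \lambda \le a_j C_1 + \lambda$, the second factor is also $\ge 1$; hence $(q_i^*)^2 \ge (q_j^*)^2$ and thus $q_i^* \ge q_j^*$, as claimed. The main obstacle I anticipate is the sign bookkeeping: one must verify that the additive multiplier $\lambda$ does not destroy the monotonicity of the denominators. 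This is precisely why working with the optimality condition of the \emph{full} objective \textbf{P3} — rather than the fixed-$M$ subproblem \textbf{P4}, whose time-budget multiplier could a priori carry either sign — is essential, because there the coefficient of $a_i$ is the explicitly positive quantity $C_1$, so the common shift $\lambda$ preserves the ordering. A secondary point to confirm is that $b_i>0$ (i.e.\ $p_iG_i>0$), which is consistent with the strict feasibility $q_i^*>0$ and keeps the closed-form expression well posed.
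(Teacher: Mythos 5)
Your proof is correct, but it takes a genuinely different route from the paper. The paper's argument is a two-line exchange (swap) argument by contradiction: assuming $q_i^*<q_j^*$, it swaps the two probabilities and observes that the swapped point is feasible and achieves a lower objective, since the swap can only decrease both positive factors of the objective --- the first factor changes by $(a_i-a_j)(q_j^*-q_i^*)\le 0$ and the second by $\alpha(b_i-b_j)\bigl(\tfrac{1}{q_j^*}-\tfrac{1}{q_i^*}\bigr)\le 0$ in your notation. That route is more elementary: it needs no differentiability, no multipliers, and no existence of an interior stationary point, only a comparison of the objective at two feasible points. Your KKT route is heavier but buys more: the closed form $(q_i^*)^2=\frac{\alpha M^* b_i}{a_i C_1+\lambda}$ is a genuine characterization of any interior stationary point, it recovers the paper's special-case solution \eqref{optsampling_statis_sys} when $\beta\to 0$ (where $\lambda=0$ by the scale-invariance exploited in the Cauchy--Schwarz argument), and it shows the ordering is an equality exactly when $a_i=a_j$ and $b_i=b_j$ --- a degenerate case in which the paper's swap argument actually fails to produce a strict decrease and hence no contradiction. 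The caveats you flag are the right ones: your argument needs $p_iG_i>0$ for all $i$ so that the objective is coercive toward the boundary $q_i\to 0$ and the minimum is attained at an interior point where stationarity is necessary (LICQ is trivial for the single linear equality), and the sign of $\lambda$ is indeed immaterial once positivity of each denominator $a_iC_1+\lambda$ is forced by $(q_i^*)^2>0$. Both proofs are sound; the paper's is the one to use if brevity is the goal, yours if one also wants the stationarity characterization.
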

\begin{proof}
We prove this theorem by contradiction. Suppose  $q_i^*<q_j^*$, when $T_i\le T_j$ and $p_iG_i\ge p_jG_j$. Then, we can simply let $q_i^\prime=q_j^*, q_j^\prime=q_i^*$ such that $q_i^\prime>q_j^\prime$ and achieve a smaller $\Expect[T_\textnormal{tot}(q_i^\prime,q_j^\prime)]$ than   $\Expect[T_\textnormal{tot}(q_i^*,q_j^*)]$, which concludes the proof.
\end{proof}}

Theorem~\ref{property1} shows that the optimal client sampling strategy %
would allocate higher probabilities to those who have smaller values $\frac{Kt_i}{f_{tot}}+\tau_i$ (e.g., having both powerful computation capability and good communication channel) and larger product value of $p_iG_i$. This 
characterizes the impact and interplay between system heterogeneity and statistical heterogeneity.

Although it is infeasible to derive an   %
analytical  solution of $\boldsymbol{q}^*$ regarding the  impact of $\tau_i$, $t_i$, and  $p_iG_i$ on $\boldsymbol{q}^*$,   %
we show that when  ${\beta}/{\alpha }\rightarrow0$ holds for a certain FL task, 
we have 
the global optimal solution $\boldsymbol{q}^*$ of Problem \textbf{P3} as
\begin{equation}
    \label{optsampling_statis_sys}
q_{i}^{*}={\left(\frac{p_{i} G_{i}}{\sqrt{\frac{Kt_i}{f_{tot}}+\tau_i}}\right)}\Big/{\left(\sum_{j=1}^{N} \frac{p_{j} G_{j}}{\sqrt{\frac{Kt_j}{f_{tot}}+\tau_j}}\right)}.
\end{equation}

This is because when ${\beta}/{\alpha }\rightarrow0$, we must have $\left(\sum\nolimits_{i=1}^Nq_it_i\right)\frac{\beta}{\alpha}\rightarrow 0$ given that $\sum\nolimits_{i=1}^Nq_i\left(\frac{Kt_i}{f_{tot}}+\tau_i\right)$ is bounded by $[M_{min}, M_{max}]$. Therefore, the objective of Problem \textbf{P3} can be rewritten as
\begin{equation}
\label{special_case}
  \min_{\boldsymbol{q}}   \left[\sum\nolimits_{i=1}^Nq_i\left(\frac{Kt_i}{f_{tot}}+\tau_i\right)\right]\left( {\sum\nolimits_{i=1}^N\frac{p_i^2G_i^{2}}{q_i}} \right).  
\end{equation}
By Cauchy-Schwarz inequality, we have %
\begin{equation}
    \label{special_case_prove}
    \begin{aligned}
   \left(\sum\limits_{i=1}^N\left(\sqrt{q_i\left(\frac{Kt_i}{f_{tot}}+\tau_i\right)}\right)^2\right)\left({\sum\limits_{i=1}^N\left(\frac{p_iG_i}{\sqrt{q_i}}\right)^{2}} \right)\\ %
   \ge
   \left(\sum\limits_{i=1}^N\sqrt{\left(\frac{Kt_i}{f_{tot}}+\tau_i\right)}\cdot{p_iG_i}\right)^{2}.
         \end{aligned}
\end{equation}
Hence, the minimum of \eqref{special_case} is $\left(\sum_{i=1}^N\sqrt{t_i}\cdot{p_iG_i}\right)^{2}$, which  %
is independent of sampling strategy  $\boldsymbol{q}$. The equality of \eqref{special_case_prove}  holds, meaning that the global optimal solution is achieved, if and only if only when  \eqref{optsampling_statis_sys} holds.

Though $\boldsymbol{q}^*$ in \eqref{optsampling_statis_sys} is valid only for the special case of $\frac{\beta}{\alpha }\rightarrow0$, the global optimal sampling solution in \eqref{optsampling_statis_sys} characterizes an analytical interplay between the system heterogeneity (i.e., $\tau_i$ and ${t_i}$) and statistical heterogeneity (i.e., $p_iG_i$).

\section{Experimental Evaluation}
\label{sec:experimentation}
In this section, we empirically evaluate the performance of our proposed  client sampling scheme. We first present  the evaluation setup and then show the experimental results. {Our experiment code is available at: \\
\url{https://github.com/WENLIXIAO-CS/WirelessFL}}

\begin{table*}[!t]
 \caption{{Estimation of $\frac{\alpha}{\beta}$  for three Setups: value of different estimation loss $F_s$ and the corresponding number of rounds for reaching  $F_s$ using uniform sampling and weighted sampling}}
  \centering
  {
\begin{tabular}{c||c||c|c|c|c|c||c}
\toprule[1.2pt]
\multirow{3}{*}{\makecell[c]{\textbf{Setup 1}\\(EMNIST dataset)}}%
    & {Estimation loss $F_s$} & $1.7$ &  $1.6$ & $1.5$ & $1.4$ &$1.3$  &  \multirow{3}{*}{{\makecell[c]{\textbf{Estimated}\\$\dfrac{\alpha}{\beta}=11.51$}}  } \\ \cline{2-7}
                       &                       {Rounds for reaching $F_s$ with uniform sampling} &  $39$ & $47$ & $58$ & $79$ & $132$ &                \\ \cline{2-7}
                       &                        {Rounds for reaching $F_s$ with  weighted sampling} &   $12$ & $19$ & $27$ & $36$ & $56$  &                 \\ \midrule[1.2pt]

\multirow{3}{*}{\makecell[c]{\textbf{Setup 2}\\(Synthetic dataset)}}        & {Estimation Loss $F_s$}&    $1.2$&    $1.15$ &  $1.1$ & $1.05$ & $1.0$  &    \multirow{3}{*}{{\makecell[c]{\textbf{Estimated}\\$\dfrac{\alpha}{\beta}=63.88$}}  } \\ \cline{2-7}
                                                  & Rounds for reaching $F_s$ with uniform sampling  &  $71$&$81$&$95$& $109$ &$133$  &                \\ \cline{2-7}
                       &                        Rounds for reaching $F_s$ with  weighted sampling  & $48$& $57$ & $69$ & $84$& $101$ &               \\ \bottomrule[1.2pt]
\multirow{3}{*}{\makecell[c]{\textbf{Setup 3}\\ (MNIST dataset)}}       & {Estimation Loss $F_s$}&    $0.3$&    $0.2875$ &  $0.275$ & $0.2625$ & $0.25$  &    \multirow{3}{*}{{\makecell[c]{\textbf{Estimated}\\$\dfrac{\alpha}{\beta}=4.92$}}} \\ \cline{2-7}
                                                  & Rounds for reaching $F_s$ with uniform sampling  &  $46$&$48$&$51$& $55$ &$59$  &                \\ \cline{2-7}
                                              & Rounds for reaching $F_s$ with  weighted sampling  & $27$& $28$ & $30$ & $32$& $34$ &               \\ \bottomrule[1.2pt]
\end{tabular}}
 \label{estimation_process}
 \end{table*}

\subsection{Experimental Setup}
\subsubsection{Platforms}
We conduct experiments both on a networked hardware prototype system and in a simulated environment.\footnote{The prototype implementation allows us to capture real system operation time, and the simulation system allows us to simulate large-scale FL environments with manipulative parameters.} 
As illustrated in Fig. \ref{HDP_new}, our prototype system consists of $N=40$ Raspberry Pis serving as clients %
and a laptop computer acting as the central server. All devices are interconnected via an enterprise-grade Wi-Fi router. We develop a TCP-based socket interface for the communication between the server and clients with bandwidth control. In the simulation system, we simulate $N=100$ virtual devices and a virtual central server.  

\subsubsection{Datasets and Models} We evaluate our results on two real datasets and a synthetic dataset. For the real dataset, we adopt the widely used MNIST dataset and EMNIST dataset %
\cite{li2018federated}. %
For the synthetic dataset, we follow a similar setup to that in \cite{li2019convergence}, which generates $60$-dimensional random vectors as input data. %
We adopt both the \emph{convex} {multinomial logistic regression} model and the \emph{non-convex}  convolutional
neural network (CNN) model with LeNet-5 architecture \cite{lecun1998gradient}.%

\begin{figure}[!t]
	\centering
	\includegraphics[width=8.8cm,height=4.6cm]{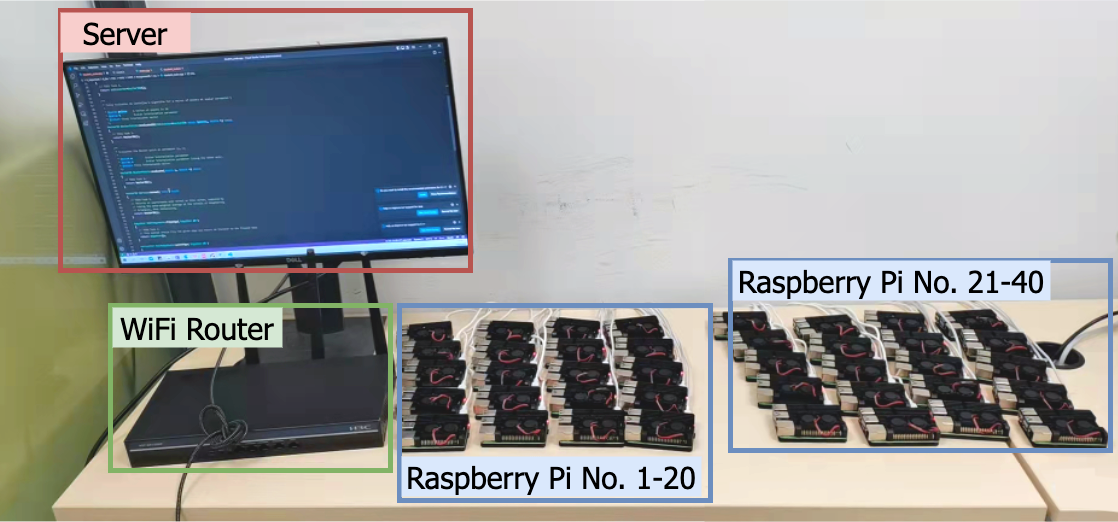}
	\caption{Hardware prototype with the laptop serving as the central server and 40 Raspberry Pis %
	serving as clients.} %
\label{HDP_new}
\end{figure}

\subsubsection{Training  Parameters}  For all experiments, we initialize our model with $\mathbf{w}_0\!=\!\mathbf{0}$ and use a SGD batch size $b=24$. We use an initial learning rate $\eta_0 =0.1$ with a decay rate $\frac{{\eta}_0}{1+r}$, where $r$ is the communication round index. We adopt the similar FedAvg settings as in \cite{bonawitz2019towards,nishio2019client,chen2020optimal,li2019convergence}, where we sample $10$\% of all clients in each round, i.e., $K\!=\!4$ for Prototype Setup and $K\!=\!10$ for Simulation Setups, with each client performing $E=50$ local iterations.\footnote{We also conduct experiments both on Prototype and Simulation Setups with variant $E$ and $K$,  which show a similar performance as the experiments in this paper.} %

\subsubsection{Heterogeneous System Parameters}
For the Prototype Setup, the  computation time for $E=50$ local iterations is almost the same amongst 40 devices, which we measured as $\tau_i\approx 0.5$ second.  %
To enable a more heterogeneous communication time, we control clients'  bandwidth and generate a uniform distribution $t_i/f_{tot} \sim \mathcal{U}(0.22,5.04)$ second, with a mean of $2.63$ second and the standard deviation of $1.39$ second.
For the simulation system, we generate  clients' computation and communication time  based on  an exponential distribution, i.e., $\tau_i \sim \exp{(1)}$ second and $t_{i}/f_{tot} \sim \exp{(1)}$ second with  
both mean and standard deviation as $1$ second.%

\subsubsection{Implementation}
we consider three experimental setups.
\begin{itemize}
    \item \textbf{Setup 1}: We conduct the first experiment on the prototype system using  logistic regression and the EMNIST dataset. To generate heterogeneous data partition, similar to \cite{li2018federated}, we randomly subsample $33,036$ lower case character samples from the EMNIST dataset and distribute  among $N\!=\!40$ edge devices in an \emph{unbalanced} (following the  power-law distribution) and \emph{non-i.i.d.}  (i.e., each device has 1--10 classes) fashion.\footnote{The number of samples and the number of classes are randomly matched, such that clients with more data samples may not have more classes.} %
    \item   \textbf{Setup 2}: We conduct the second experiment in the simulated system using  logistic regression and the Synthetic dataset. To simulate a heterogeneous setting, we use the non-i.i.d.  $Synthetic \ (1, 1)$ setting. We generate $20,509$ data samples and distribute them among $N\!=\!100$ clients in an \emph{unbalanced} power-law distribution.%

     \item \textbf{Setup 3}: We conduct the third experiment in the simulated
system using \emph{non-convex CNN} and the MNIST dataset, where we randomly subsample 
$15,129$ data samples from the MNIST dataset and distribute them among $N \!=\! 100$ clients in an \emph{unbalanced} (following the power-law distribution) and \emph{non-i.i.d.}  (i.e., each device has 1--6 classes) fashion.%
     
\end{itemize}
We present the estimation process and results of $\alpha/\beta$ 
for the three experiment setups in Table~\ref{estimation_process}.

\begin{table*}[t!]  \label{all_figure_summary}
{ \caption{{Wall-clock Time for  Reaching Target {Loss} for Different Sampling Schemes}}}
  \centering
  \begin{threeparttable}  
    \begin{tabular}{c||c|c|c|c}
    \toprule
  \diagbox[]{Setups}
   {Schemes} &  \textbf{proposed sampling} & statistical sampling & weighted sampling & uniform sampling  \bigstrut\\
    \hline
   {\makecell[c]{\textbf{Setup 1}\\(EMNIST dataset)}} &  {$\mathbf{2428.7 \pm 707.3}$ \textbf{s}} & {$3958.3 \pm 302.4$ s ($\mathbf{1.6\times}$)} & {$3477.4 \pm 334.2$ s ($\mathbf{1.4\times}$)} & {$8487.4 \pm 528.8$ s ($\mathbf{3.5\times}$)}\tnote{\dag}  \bigstrut\\
    \hline
   {\makecell[c]{\textbf{Setup 2} \\ (Synthetic dataset)} }  & {$\mathbf{9767.8 \pm 1454.3}$ \textbf{s}} & {$25730.0 \pm 2358.7$ s ($\mathbf{2.6\times}$)}& {$24207.9 \pm 1884.9$ s ($\mathbf{2.5\times}$)} & {$17514.3 \pm 453.8$ s ($\mathbf{1.8\times}$)}  \bigstrut\\
    \hline
{\makecell[c]{\textbf{Setup 3} \\(MNIST dataset)}} &  {$\mathbf{2453.7 \pm 1289.2}$ \textbf{s}} & {$3295.0 \pm 878.3$ s ($\mathbf{1.3\times}$)} & {$4548.6 \pm 1530.2$ s ($\mathbf{1.9\times}$)} & NA \bigstrut\\
    \bottomrule
    \end{tabular}%
     \begin{tablenotes}    
        \small%
        \item \dag \  ``$3.5\times$"  represents the wall-clock time ratio of uniform sampling over proposed sampling  for reaching the target loss, which is equivalent to proposed sampling takes $71$\% less time than uniform sampling.      
      \end{tablenotes}            
    \end{threeparttable}    
\end{table*}

\begin{figure*}[!t]
\centering
\subfigure[Loss with wall-clock time]{\label{hd_lossa}\includegraphics[width=5.95cm]{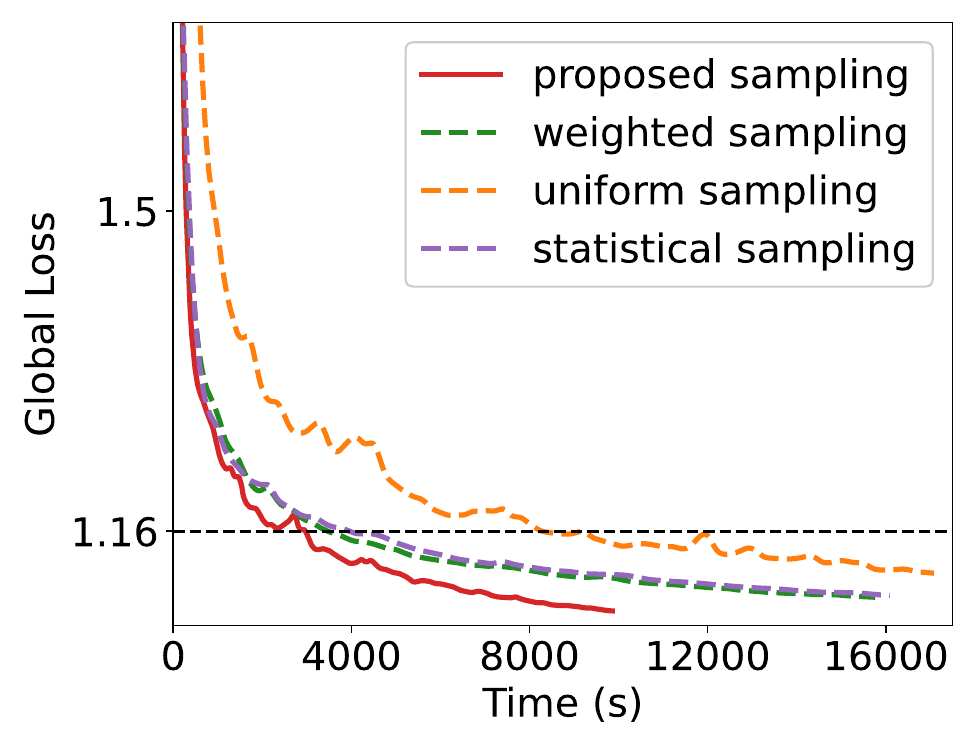}}
\subfigure[Accuracy with wall-clock time]{\label{hd_lossb}\includegraphics[width=5.95cm]{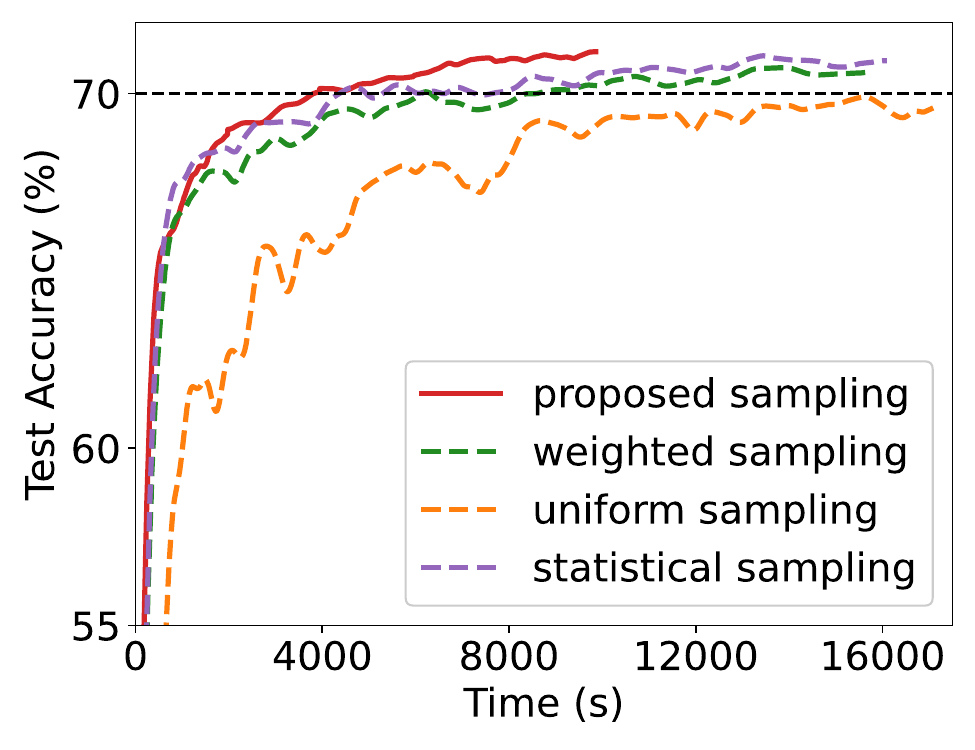}}
\subfigure[Loss with number of rounds]{\label{hd_lossc}\includegraphics[width=5.95cm]{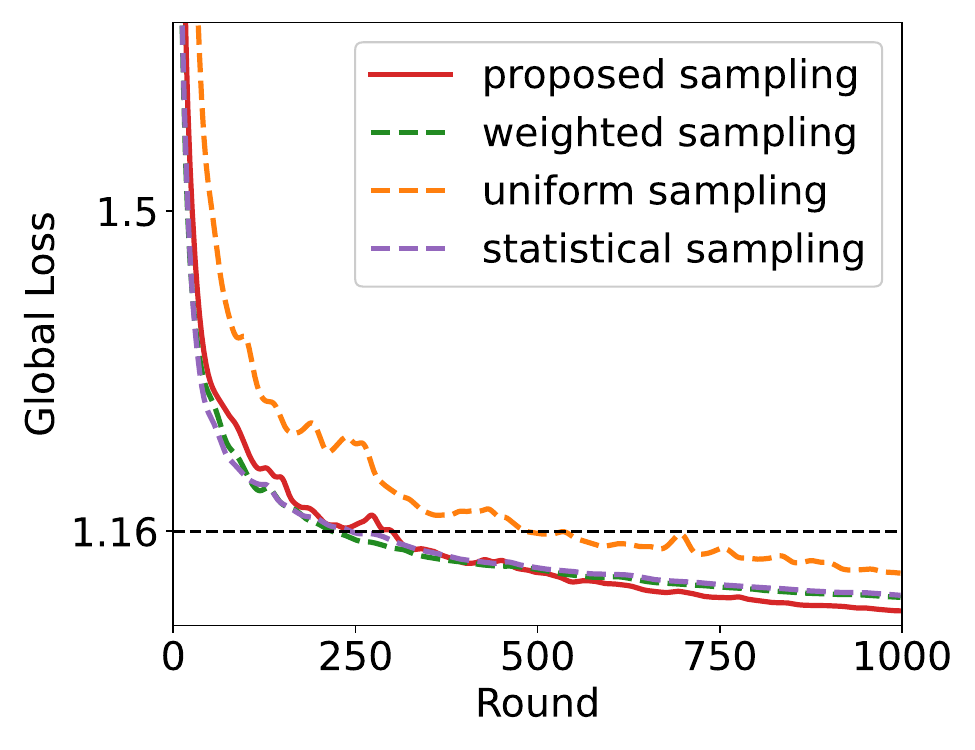}}
\caption{%
{Performances of \textbf{Setup 1} with logistic regression and EMNIST dataset for reaching  target loss $1.16$ and target accuracy $70$\%. %
}}
\label{hd}
\end{figure*}

\begin{figure*}[ht]
\centering
\subfigure[Loss with wall-clock time]{\label{soft_loss}\includegraphics[width=5.95cm]{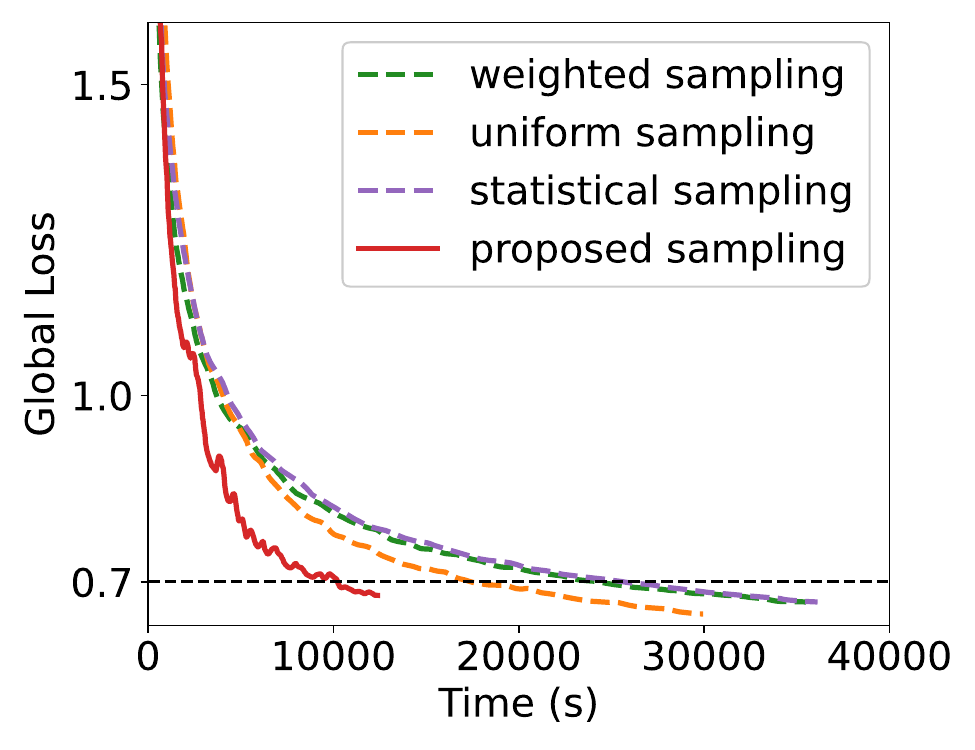}}
\subfigure[Accuracy with wall-clock time]{\label{soft_acc}\includegraphics[width=5.95cm]{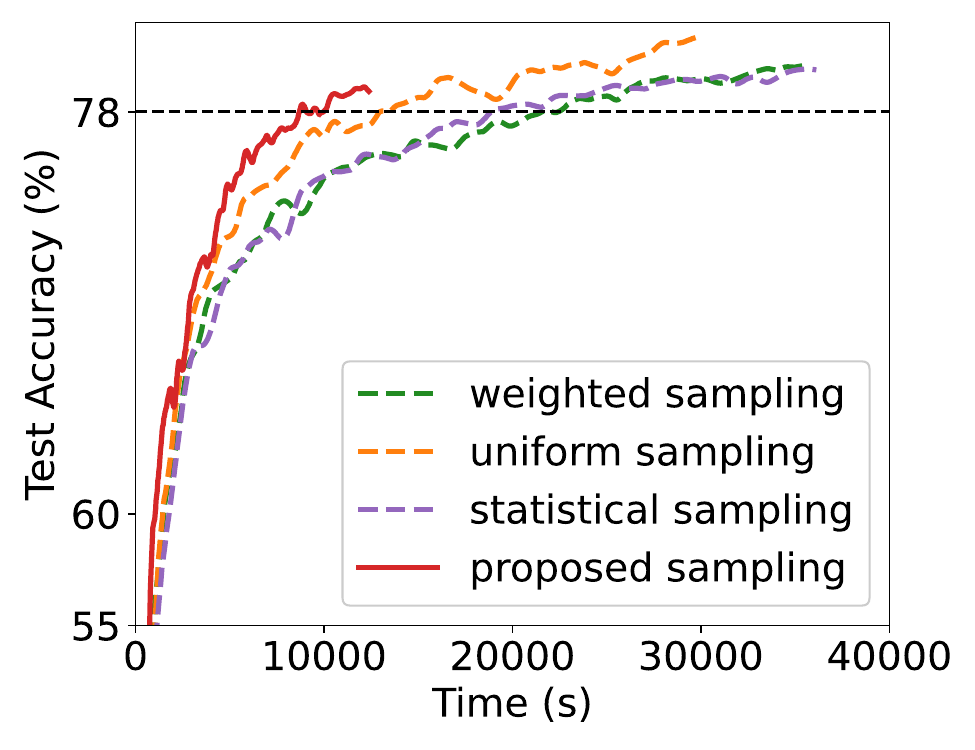}}
\subfigure[Loss with number of rounds]{\label{soft_loss_round}\includegraphics[width=5.95cm]{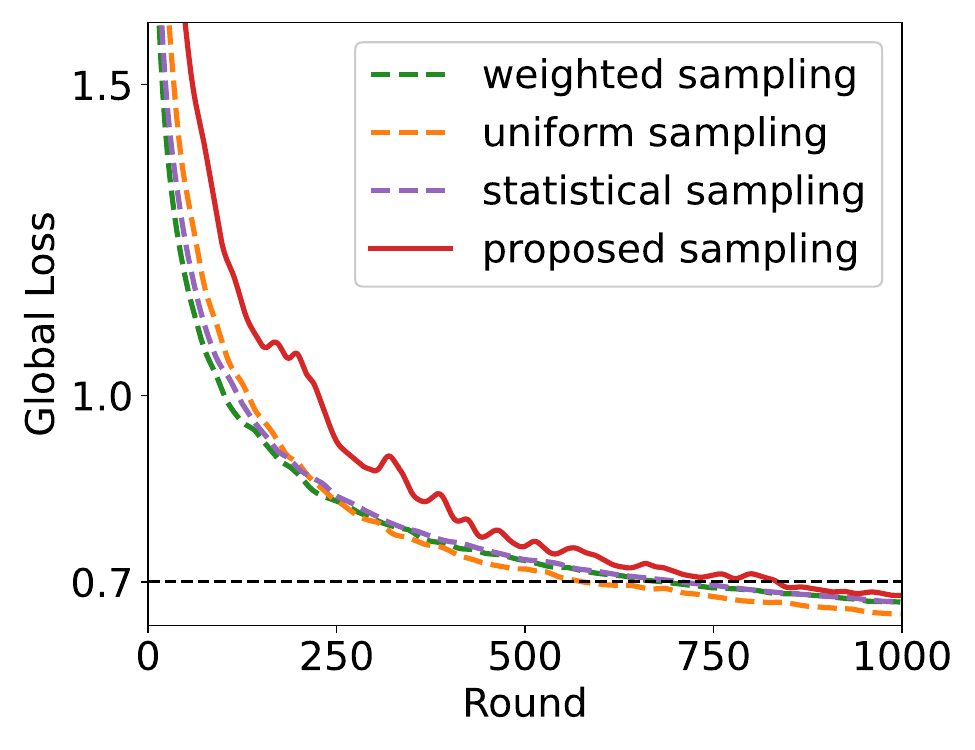}}
\caption{%
{Performances of \textbf{Setup 2} with logistic regression  and $Synthetic$ dataset for reaching    %
target loss $0.7$ and target accuracy $78$\%. %
}}
\label{soft}
\end{figure*}

\subsection{Experimental Results}
In this section, we first compare the performance of our proposed  client sampling scheme (Algorithm 2) with different sampling baselines. Then, we evaluate the impact of the key system parameter on the total convergence time. We evaluate the wall-clock time performances of both the  global training loss and test accuracy on the aggregated model in each round for all sampling schemes. 
We average each experiment over $20$ independent runs. For a fair comparison, we use the same random seed to compare sampling schemes in a single run and vary random seeds across different runs.

\subsubsection{Comparison with Different Sampling Schemes}
We compare the performance of our proposed  client sampling scheme  with three benchmarks 
 1)  \emph{uniform sampling},
  2) \emph{weighted sampling}, 
  and 3) \emph{statistical sampling} where we sample clients according to statistical heterogeneity without system heterogeneity.  
Benchmarks 1--2 are widely adopted for convergence guarantees in \cite{li2019convergence,haddadpour2019convergence,karimireddy2019scaffold,yang2021achieving, qu2020federated}. The third baseline is  an offline variant of the proposed schemes in \cite{chen2020optimal, rizk2020federated}.\footnote{The client sampling in \cite{chen2020optimal, rizk2020federated} is weighted by the norm of the local stochastic gradient in each round, which frequently requires the knowledge of stochastic gradient from all clients to calculate the sampling probabilities.}
Fig.~\ref{hd}, Fig.~\ref{soft}, and Fig.~\ref{soft2} show the results of Setup 1,  Setup 2, and Setup 3, respectively. We summarize the key observations as follows.

\begin{itemize}
    \item 
{\emph{Loss with Wall-clock Time:}
As predicted by our theory, Fig.~\ref{hd}(a), Fig.~\ref{soft}(a), and Fig.~\ref{soft2}(a) show that \emph{our proposed sampling scheme achieves the same target loss with significantly less time}, compared to the rest of baseline sampling schemes. Specifically, for Prototype Setup $1$ in Fig.~\ref{hd}(a), our proposed sampling scheme spends around $71$\% less time  %
than uniform sampling,  $39$\% less time than statistical sampling, and $30$\% less time than  weighted sampling for  reaching  the  same  target  loss. Fig.~\ref{soft2}(a) highlights the fact that  our
proposed sampling works well with the \emph{non-convex CNN} model, under which the  naive uniform sampling cannot reach the target loss within $4,000$ seconds, indicating the importance of a careful client sampling design.   
Table~$3$ summarizes the superior performances of our proposed sampling scheme in wall-clock time for reaching target {loss} in all three setups.} %

\begin{figure*}[!t]
\centering
\subfigure[Loss with wall-clock time]{\label{soft_loss_2}\includegraphics[width=5.95cm]{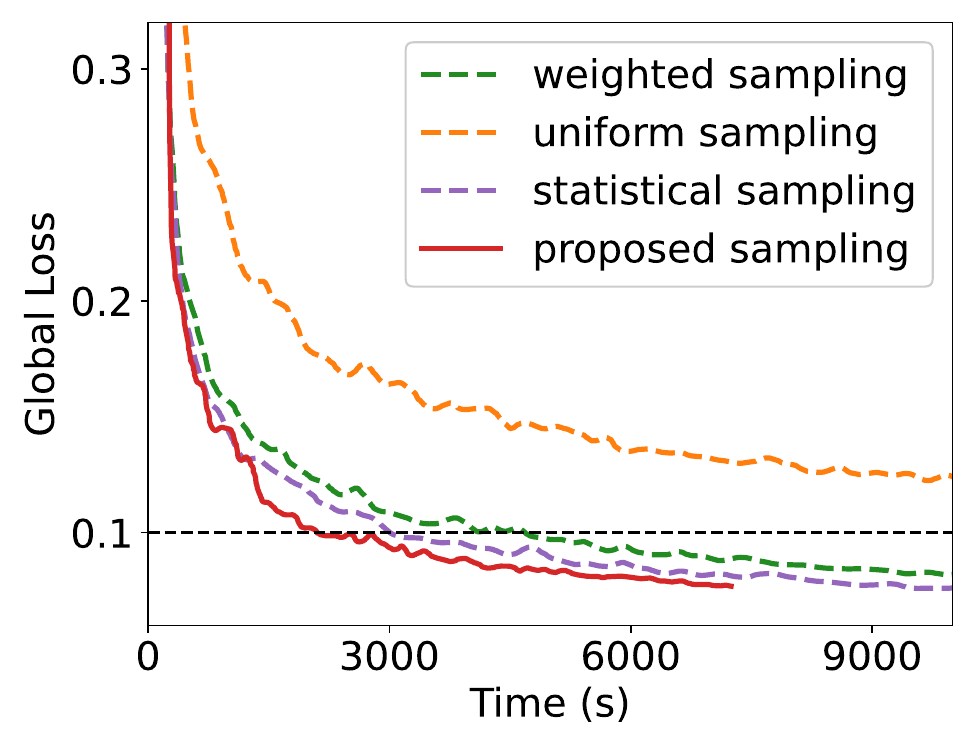}}
\subfigure[Accuracy with wall-clock time]{\label{soft_acc_2}\includegraphics[width=5.95cm]{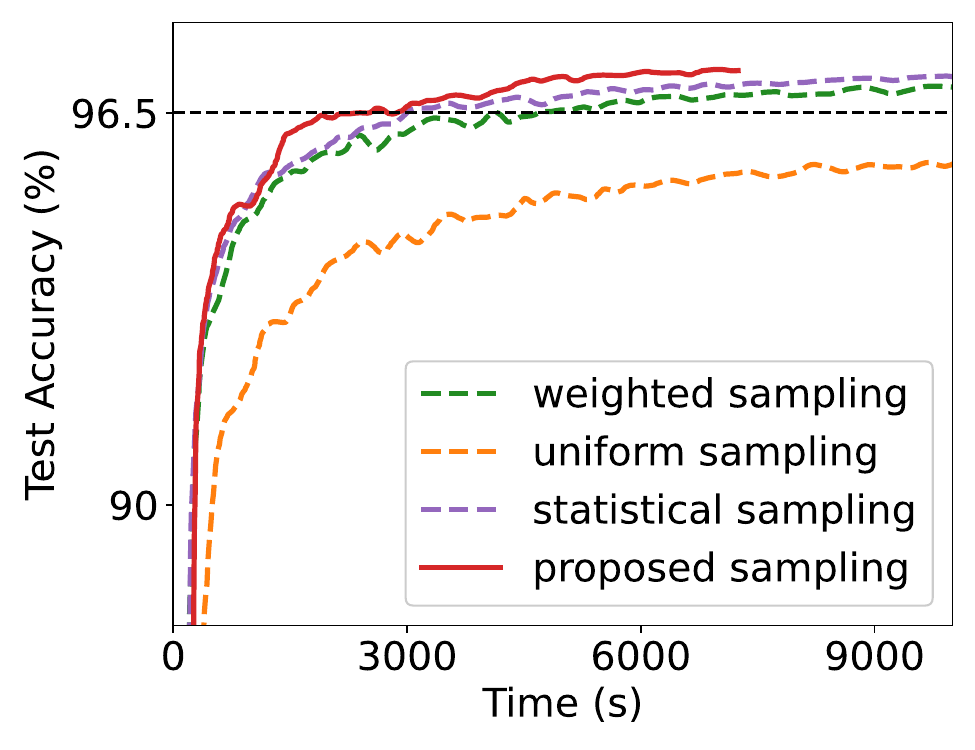}}
\subfigure[Loss with number of rounds]{\label{soft_loss_round_2}\includegraphics[width=5.95cm]{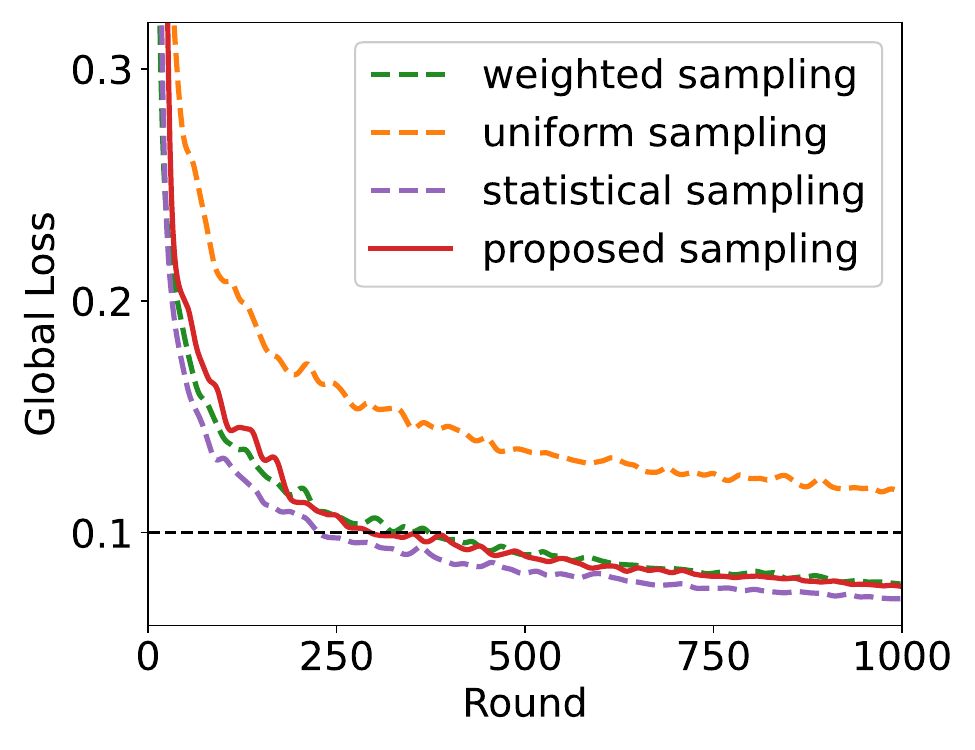}}
\caption{%
{Performances of \textbf{Setup 3} with non-convex CNN and  MNIST dataset for reaching target   %
loss $0.1$ and target accuracy $96.5$\%. %
}}
\label{soft2}
\end{figure*}

\begin{figure*}[!t]
\centering
\label{impact_K}
\subfigure[Loss with wall-clock time]{\label{impact_K_loss}\includegraphics[width=5.97cm]{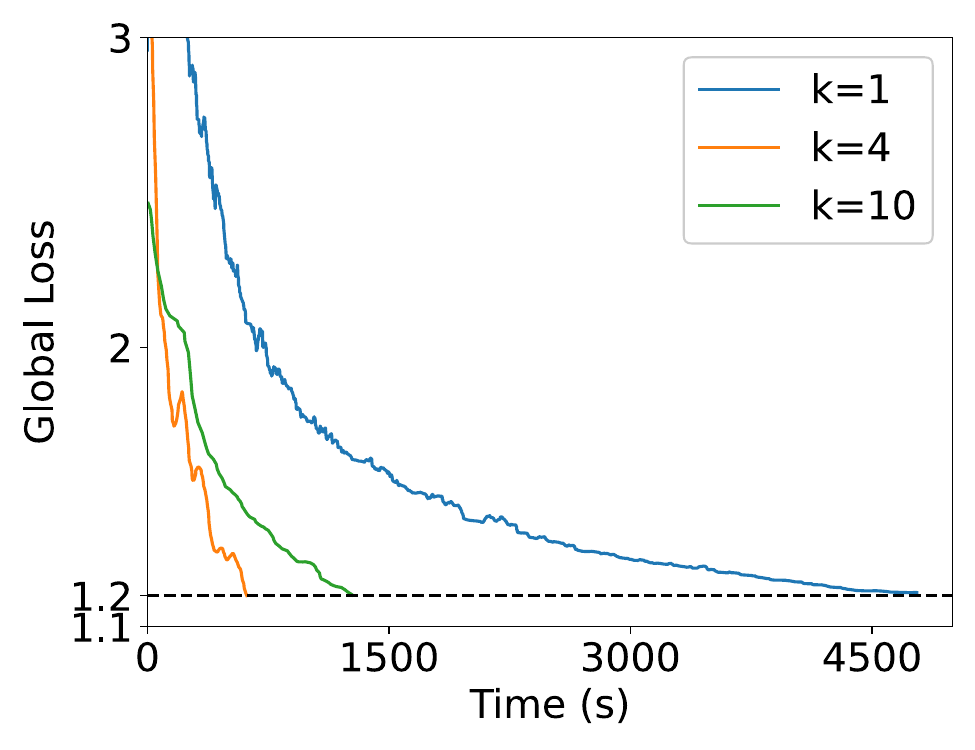}}
\subfigure[Accuracy with wall-clock time]{\label{impact_K_acc}\includegraphics[width=5.97cm]{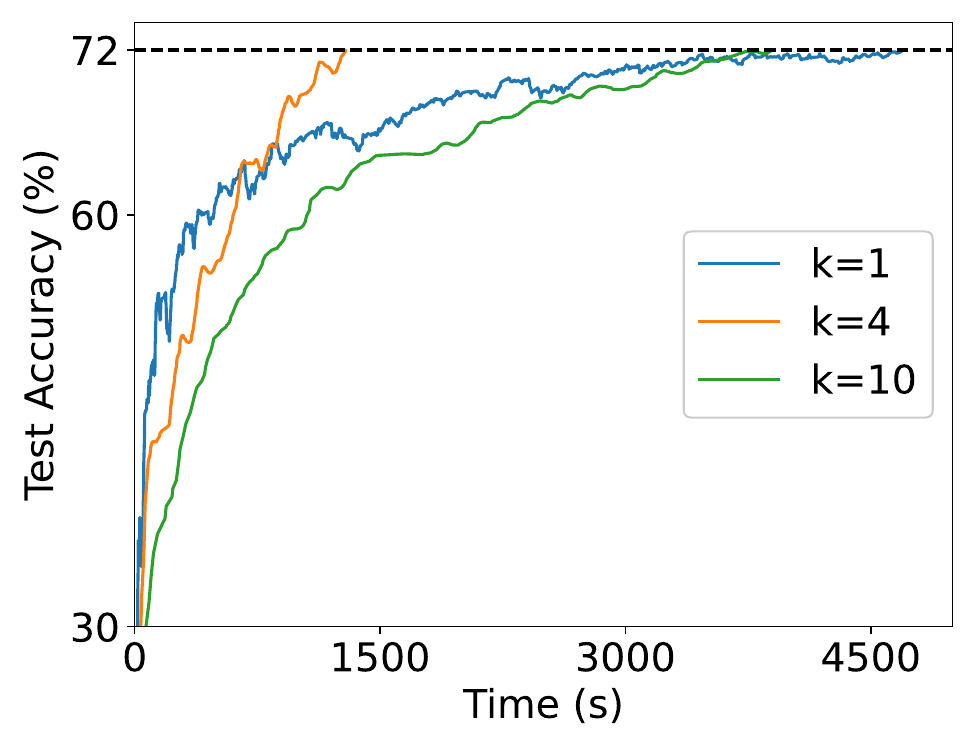}}
\subfigure[Loss with number of rounds]{\label{impact_K_loss_round_syn}\includegraphics[width=5.97cm]{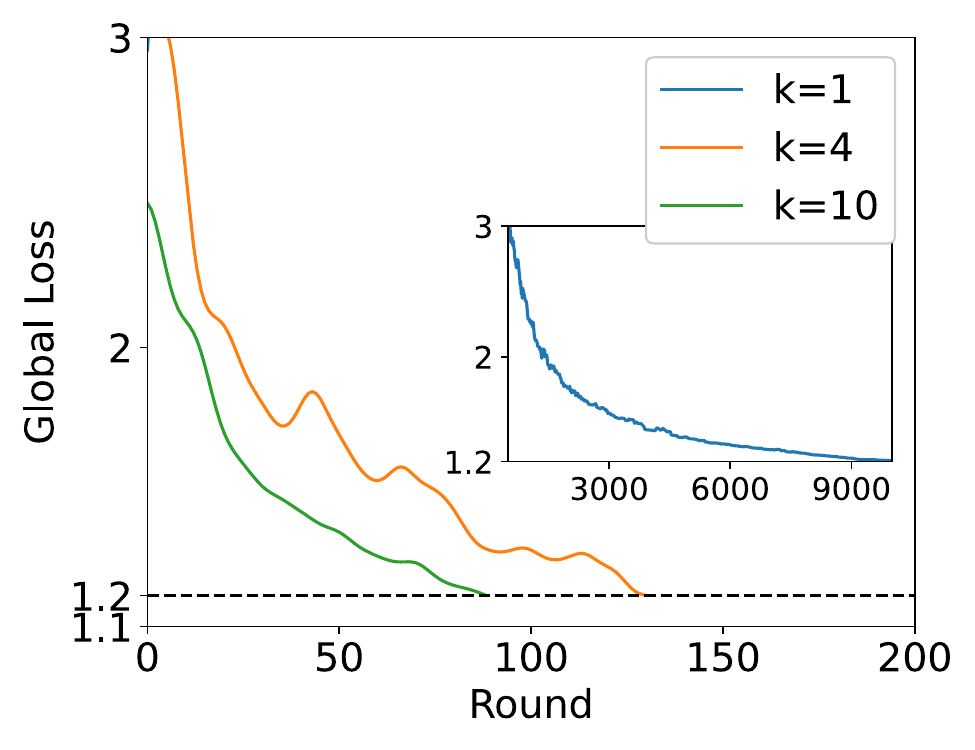}}
\caption{%
Performances of \textbf{Setup 2} of proposed scheme for reaching target loss $1.2$ and target accuracy $72$\% with different sampling number $K$. %
}
\end{figure*}

\item {\emph{Accuracy with Wall-clock Time:} 
As shown in Fig.~\ref{hd}(b), Fig.~\ref{soft}(b), and Fig.~\ref{soft2}(b), our proposed sampling scheme \emph{achieves the target test accuracy %
much faster than 
the other benchmarks}. %
Notably, for Setup $2$ with the target test accuracy $78$\% in Fig.~\ref{soft}(b), our proposed sampling scheme takes around $44$\% less time than uniform sampling, $62$\% less time than statistical sampling, and $60$\% less time than weighted sampling. We can also observe the superior test accuracy performance of our proposed sampling schemes in Prototype Setup $1$ and non-convex Setup $3$ in Fig.~\ref{hd}(b) and Fig.~\ref{soft2}(b), respectively.}

\item {\emph{Loss with Number of Rounds:}  
Fig.~\ref{hd}(c), Fig.~\ref{soft}(c), and Fig.~\ref{soft2}(c) show that  our proposed sampling scheme requires more training rounds for reaching the target loss compared to some of other baselines. This observation is expected since our proposed sampling scheme \emph{aims to minimize the wall-clock time instead of the number of rounds.}} %

\end{itemize}

{\subsubsection{Impact of Client Sampling Number $K$} Fig.~$6$ shows the impact of the client sampling number $K$ for our proposed optimal sampling scheme for Setup~$2$. %
We see from Fig.~6(a) and Fig.~6(b) that as $K$ increases, the time for our proposed optimal sampling scheme reaching the target loss and accuracy \emph{first decreases and then increases}. In particular, the shortest time for reaching the target loss with is $617.1$s under $K=4$, compared to  $4778.1$s, $1272.5$s, and $2021.6$s under $K=1$,$10$, and $16$, respectively. 
  This  demonstrates the trade-off design for the number of $K$ in wireless networks, as a smaller $K$ benefits the per-round time (shown as in Theorem~$2$) but requires more rounds for convergence (shown as in Theorem~$1$), whereas a larger $K$ benefits the number of rounds for convergence but yields a longer per-round time due to  %
   limited wireless bandwidth.} 

\section{Conclusion and Future Work}
\label{sec:conclusion}
In this work, we studied the optimal  client sampling strategy that addresses the system and statistical heterogeneity in FL to minimize the wall-clock convergence time in wireless networks. For system heterogeneity, we proposed an adaptive bandwidth allocation scheme for the sampled clients in each round  with arbitrary sampling probabilities to address the straggling effect. For statistical heterogeneity, we obtained a new tractable convergence bound for FL algorithms with arbitrary client sampling probabilities. Based on the bound, we formulated an analytically non-convex wall-clock time minimization problem. We developed an efficient algorithm to learn the unknown parameters in the convergence bound and designed a low-complexity algorithm to approximately solve the non-convex problem. 
Our solution characterizes the interplay between clients' communication delays (system heterogeneity) and data importance (statistical heterogeneity) on  the  optimal  client sampling design. We also identify a trade-off design for the  number of sampled clients in each round, which balances the impact of wireless bandwidth limitation and convergence rate.  
Experimental results validated the superiority of our proposed scheme compared to several baselines in speeding up wall-clock convergence time. 

{Our client sampling optimization marks an initial effort to tackle system and statistical heterogeneity in FL. Looking ahead, we plan to transform our offline sampling strategy into an online one. This adaptation caters to practical wireless scenarios with changing channel conditions. Furthermore, we're considering a joint optimization of client sampling with other control variables, like the number of sampling instances or local iterations. This joint approach could further cut down total learning time.}

\bibliographystyle{IEEEtran}
\bibliography{ref}

\end{document}